\documentclass[10pt]{article}
\usepackage[preprint]{tmlr} 

\usepackage{amsmath,amssymb,amsthm}

\usepackage{graphicx}
\usepackage{tikz}
\usetikzlibrary{positioning,calc}
\usepackage{booktabs}
\usepackage{multirow}
\usepackage{colortbl}
\usepackage{placeins}

\usepackage{xcolor}
\pagecolor{white}

\usepackage{algorithm}
\usepackage{algpseudocode}

\usepackage{natbib}

\usepackage{hyperref}
\usepackage{url}

\newtheorem{theorem}{Theorem}

\newtheorem{proposition}[theorem]{Proposition}

\theoremstyle{definition}

\theoremstyle{remark}
\newtheorem{remark}[theorem]{Remark}

\title{Conformal Tradeoffs: Operational Profiles Beyond Coverage}

\author{\name Petrus H. Zwart \email PHZwart@lbl.gov \\
      \addr Center for Advanced Mathematics in Energy Research Applications, Berkeley Synchrotron Infrared Structural Biology Program, \& Molecular Biophysics and Integrated Bioimaging Division, 1 Cyclotron Road, Berkeley, CA 94720, USA
      }

\def\month{1}  
\def\year{2026} 
\def\openreview{\url{https://openreview.net/forum?id=XXXX}} 

\begin{document}
\raggedbottom

\maketitle
\begingroup
\renewcommand{\thefootnote}{}
\renewcommand{\theHfootnote}{ai-disclosure}
\footnotetext[1]{\scriptsize Generative AI was used to assist in the preparation and formatting of this document.}
\setcounter{footnote}{0}
\endgroup

\begin{abstract}
Conformal prediction gives exact finite-sample coverage guarantees under
exchangeability, but deployed systems are judged by more than coverage
alone. For a fixed calibrated rule reused over a finite operational
window, stakeholders also care about deployment-facing quantities such as
commitment frequency, deferral, and decisive error exposure. These are
not determined by coverage: calibration choices with similar coverage can
still induce materially different operational profiles.

We study this characterization gap in a scoped setting: binary
split conformal prediction under exchangeability with a fixed deployed
rule. We introduce the Small-Sample Beta Correction (SSBC) which gives finite-sample coverage
semantics for the deployed rule: it inverts the Beta/Beta--Binomial law
governing calibration-conditional coverage to map a user request
$(\alpha^\star,\delta)$ to the least conservative calibration grid point
with calibration-conditional PAC semantics for the realized deployed rule.
Calibrate-and-Audit then fixes the rule by
calibration and uses an independent audit split to estimate the induced
region--class label table, a reusable summary from which deployment-facing
Key Performance Indicators (KPIs) follow by projection. Under this design,
fixed operational rates admit exact finite-sample Binomial inference,
while Beta--Binomial envelopes serve as practical predictive summaries for
future windows. The induced partition also exposes regime boundaries,
Pareto-relevant tradeoffs, and inverse-pricing questions for fixed
downstream conventions.

Simulations validate the SSBC semantics and compare audit-based summaries
with leave-one-out planning proxies; molecular toxicity data provide an
audit-based empirical example, and a solubility case study illustrates scenario
planning once coverage semantics are fixed. 
\end{abstract}


      \section{Introduction: deployment-facing conformal prediction}
\label{sec:introduction}
Conformal prediction provides exact finite-sample coverage guarantees under
exchangeability, but deployment decisions are not made from coverage alone
\citep{vovk2005algorithmic,shafer2008tutorial,AngelopoulosBates2023Gentle}.
Once a classifier is calibrated and reused over a finite operational window,
stakeholders also care about commitment, deferral, and decisive-error
exposure. Those quantities affect throughput and risk, yet they are not
determined by marginal coverage.

We study this gap for binary split conformal prediction with a fixed deployed
rule. Calibration $\tau$ fixes thresholds on score space $\mathcal X$, the thresholds induce a finite region
partition $R_\tau(X)$ and associated fixed labels $Y$, and the deployed system is observed through the joint process
$(R_\tau(X),Y)$. Our central claim is that this region--class structure is the
deployment object of interest: it explains why coverage-matched rules can still
behave differently in practice, and it provides a reusable summary from which
many operational KPIs follow by projection.

Our method has two layers. The Small-Sample Beta Correction (SSBC) maps a user
request $(\alpha^\star,\delta)$ to the least conservative conformal grid point
with the intended finite-sample coverage semantics for the realized deployed
rule. Calibrate--and--Audit then freezes that rule and uses an independent
audit split to estimate deployment-facing rates. Under this design, fixed KPIs
admit exact Binomial inference, while Beta--Binomial envelopes provide
practical planning summaries for future windows.

The exact finite-sample guarantees in the paper are deliberately narrow: they
apply to SSBC's coverage semantics for a fixed deployed rule and to
audit-based inference for pre-declared KPIs at that same fixed rule. By
contrast, calibration sweeps, Pareto views, leave-one-out proxies, and
inverse-pricing analyses are planning tools rather than simultaneous
certification statements.

\begin{figure}[t]
  \centering
  \includegraphics[width=\textwidth]{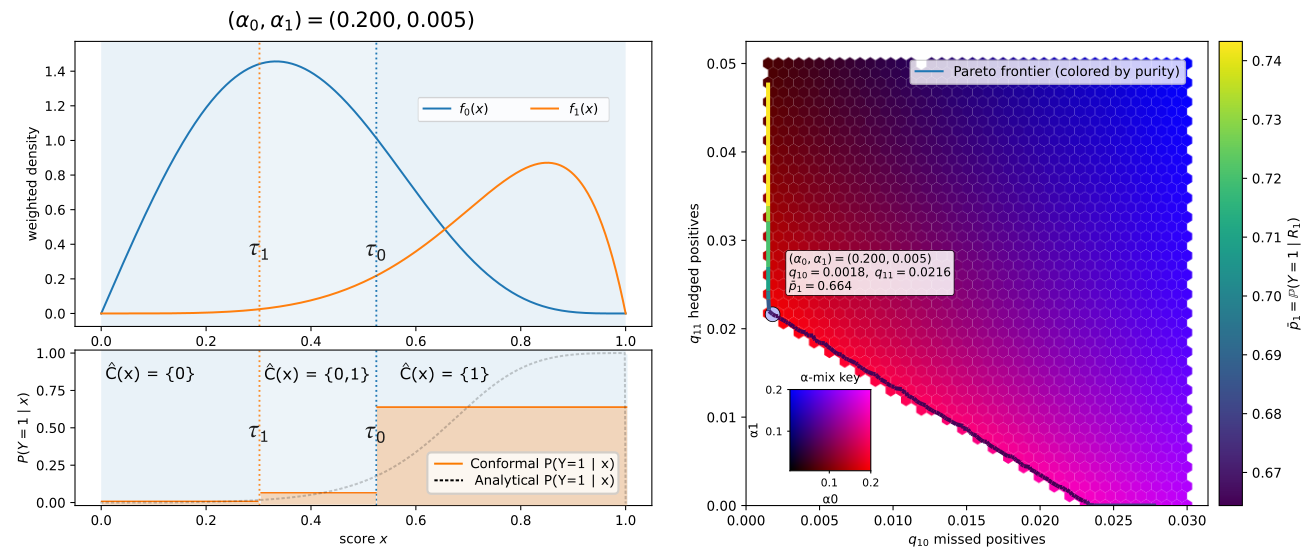}
  \caption{\textbf{Operational view of a calibrated conformal rule.} Calibration
  fixes a region partition; auditing the resulting region--class table yields a
  reusable operational summary; sweeping calibration settings traces feasible
  trade-offs.}
  \label{fig:hook}
\end{figure}

\subsection{Contributions}
\label{subsec:contributions}
In the scoped setting of binary split conformal prediction under
exchangeability, the paper makes three contributions.

\paragraph{(1) Operational structure beyond coverage.}
The calibration-induced partition constrains deployment behavior: changing
thresholds reallocates mass across a small set of region types, so operational
KPIs are coupled rather than independently tunable. This yields attainable
sets, Pareto-relevant regimes, and a natural interface for asking which
downstream action conventions are justified by the audited region-wise label
composition.
\paragraph{(2) SSBC as a coverage-semantics anchor.}
SSBC inverts the exact rank/Beta law to map $(\alpha^\star,\delta)$ to the
least conservative deployed grid point whose realized coverage satisfies a
calibration-conditional PAC-style statement. It turns a nominal request into an
explicit finite-sample semantic claim about the realized conformal rule.
\paragraph{(3) Calibrate--and--Audit for deployment KPIs.}
The region--class label table is the reusable audit object for a fixed rule.
From it one can derive commitment, deferral, singleton-error, and related KPI
rates by projection. Exact Binomial inference applies pointwise to fixed
audit-based KPIs, while leave-one-out proxies are used only for exploratory
planning when an audit split is unavailable.

\subsection{Relation to prior work}
\label{subsec:positioning}
Conformal prediction constructs set-valued predictors with finite-sample
coverage guarantees under exchangeability
\citep{vovk2005algorithmic,shafer2008tutorial,AngelopoulosBates2023Gentle}.
Conformal risk control (CRC) extends this perspective by selecting thresholds
to satisfy user-specified \emph{scalar} risk constraints under related
assumptions \citep{Angelopoulos2024CRC,bates2021distribution}, and other work
studies stronger conditional notions of validity or connects conformal methods
to downstream decision-making
\citep{tibshirani2019conformal,Fannjiang2022FeedbackCovariateShift,
Bian2023TrainingConditional,Gibbs2025Conditional,Lekeufack2023CDT,Kiyani2025DTFCP}.
These studies typically ask how to guarantee coverage, a chosen scalar risk
functional, or decision quality under a specified objective. Our emphasis is
different: once a conformal rule is fixed, what operational behavior should one
expect from it, and how much of that behavior is already determined by
calibration geometry rather than by coverage alone?

Recent Human-Computer Interaction (HCI)-oriented work raises a related
deployment concern: conformal sets can be a ``murky'' interface when validity
summaries do not translate cleanly into action-relevant
consequences~\citep{hullman2025conformalpredictionhumandecision}. Our
region--class label table view puts that issue on a concrete footing: it
audits how mass falls across region types and labels, exposing how often a rule
commits, hedges, or abstains, and where decisive errors arise. This makes clear
how coverage-matched rules can still behave differently in deployment. We do
not replace validity guarantees; we clarify what they do and do not imply for
deployment behavior. Once constructed, the table supports many KPIs and policy
projections without re-running calibration.

Closest in spirit to our viewpoint is the inverse CRC formulation of
\citet{zhouzhu2026inversecrc}, who trace certified miscoverage--regret
trade-offs for predict-then-optimize pipelines, and related work on conformal
efficiency optimizes set-size functionals subject to validity constraints
\citep{yangkuchibhotla2021}. Our objective is different: rather than introducing
another scalar criterion, we make the \emph{deployment interface} itself the
primitive. A calibration choice $\theta$ fixes a finite conformal partition,
and the induced region--class label table is the minimal reusable summary that
determines \emph{many} operational KPIs by linear projection under any fixed
region-based policy; this yields an \emph{operational profile} (rate-vector)
map $\theta \mapsto \mathbf{r}(\theta)$ and an attainable set of behaviors that is
geometrically constrained, so coverage-matched rules can still induce
qualitatively different deployment regimes. Methodologically, we enforce a
disciplined two-stage workflow: calibration sweeps and KPI estimates on the audit 
split, or LOO surrogates when an audit split is unavailable, are used only to 
expose feasible regimes and trade-offs, while exact finite-sample guarantees 
are asserted \emph{pointwise} for pre-declared KPIs at a fixed deployed rule 
using an independent audit split via Binomial inference (with Beta--Binomial 
envelopes as planning summaries for finite windows). SSBC plays a complementary 
role as a \emph{semantics anchor}, mapping a user request $(\alpha^\star,\delta)$ 
to a concrete calibration grid point so the operational map is navigated relative 
to an explicit finite-sample coverage semantics for the realized rule.

\subsection{Roadmap}
\label{subsec:roadmap}
The remainder of the paper is organized as follows: Section~\ref{sec:setting} defines the fixed deployed object and the
region--class label table, Section~\ref{sec:operational-quantities} presents
SSBC and Calibrate--and--Audit, Section~\ref{sec:geometric_manifolds} gives the
binary geometric interpretation, Section~\ref{sec:experiments:setup} reports
simulation, molecular toxicology, and solubility scenario planning, and 
Section~\ref{sec:discussion} concludes. Technical derivations and extended
examples are deferred to the appendices.
      
      \section{Setting and notation: calibration-conditional viewpoint}
\label{sec:setting}

We study a \emph{single deployed classifier}: a scoring model is trained once,
treated as fixed, and then calibrated once by split conformal prediction.
Randomness therefore comes from the calibration draw and future deployment
examples, not from retraining or repeated recalibration. To evaluate
deployment-facing quantities beyond coverage, we reserve an exchangeable audit
split that is never used to choose thresholds.

\subsection{Data splits and exchangeability assumptions}

Let $\mathcal{D}_{\mathrm{train}}$ denote the training data used to fit a base
score model. After training, the score function is treated as fixed. Let
\[
\mathcal{D}_{\mathrm{cal}}=\{(X_i^{c},Y_i^{c})\}_{i=1}^{n_{\mathrm{cal}}},
\qquad
\mathcal{D}_{\mathrm{audit}}=\{(X_i^{a},Y_i^{a})\}_{i=1}^{n_{\mathrm{audit}}}.
\]
The calibration split $\mathcal{D}_{\mathrm{cal}}$ is used once to set the
deployed thresholds, while $\mathcal{D}_{\mathrm{audit}}$ is reserved for
post-calibration evaluation of fixed-rule KPIs.

We assume calibration, audit, and future deployment points are jointly
exchangeable conditional on the fixed trained score model. We focus on binary
classification because it exposes the geometry and deployment trade-offs most
cleanly.

\subsection{Scores and calibration thresholds}
\label{subsec:scores-thresholds}

Let $\mathcal{Y}=\{1,\ldots,K\}$, and let
$s:\mathcal{X}\times\mathcal{Y}\to\mathbb{R}$ be a nonconformity score
\citep{shafer2008tutorial,lei2018distribution}. In the experiments we use
$s(x,y)=1-P(y\mid x)$, but the setup below does not depend on that choice. 

Given $\mathcal{D}_{\mathrm{cal}}$, compute calibration scores
\[
S_i := s(X_i^c,Y_i^c), \qquad i=1,\ldots,n_{\mathrm{cal}},
\]
and let $S_{(1)}\le\cdots\le S_{(n_{\mathrm{cal}})}$ denote their order
statistics. Split conformal calibration selects
\[
\tau := S_{(k)},
\]
where $k\in\{1,\ldots,n_{\mathrm{cal}}\}$ is determined by the requested
miscoverage level. Equivalently, one may index the same grid by
$u:=n_{\mathrm{cal}}+1-k$, so the deployed grid level is
$\alpha_{\mathrm{grid}}=u/(n_{\mathrm{cal}}+1)$.

We use class-conditional split conformal
\citep{Vovk2012PAC,vovk2012}, so thresholds $\tau_y$ are computed separately
within each class. For a fixed calibration draw, this yields a threshold vector
$\tau=(\tau_1,\ldots,\tau_K)$ that remains fixed during deployment.

\subsection{Regions, observability, and policies}
\label{subsec:regions}

Once thresholds are fixed, they induce a finite region label. Writing
\[
s(x) := (s(x,1),\ldots,s(x,K)),
\qquad
\tau := (\tau_1,\ldots,\tau_K),
\]
the region map is
\[
R_\tau(x)
:=
\big(
\mathbf{1}\{s(x,1)\le\tau_1\},\ldots,
\mathbf{1}\{s(x,K)\le\tau_K\}
\big)
\in\{0,1\}^K.
\]
Thus calibration fixes a finite partition of score space. The partition itself
depends only on $\tau$, while the amount of mass carried by each region depends
on the deployment distribution. This partition is the paper's operational
interface (Figure~\ref{fig:regions}).

\begin{figure}[t]
  \centering
  \includegraphics[width=\linewidth]{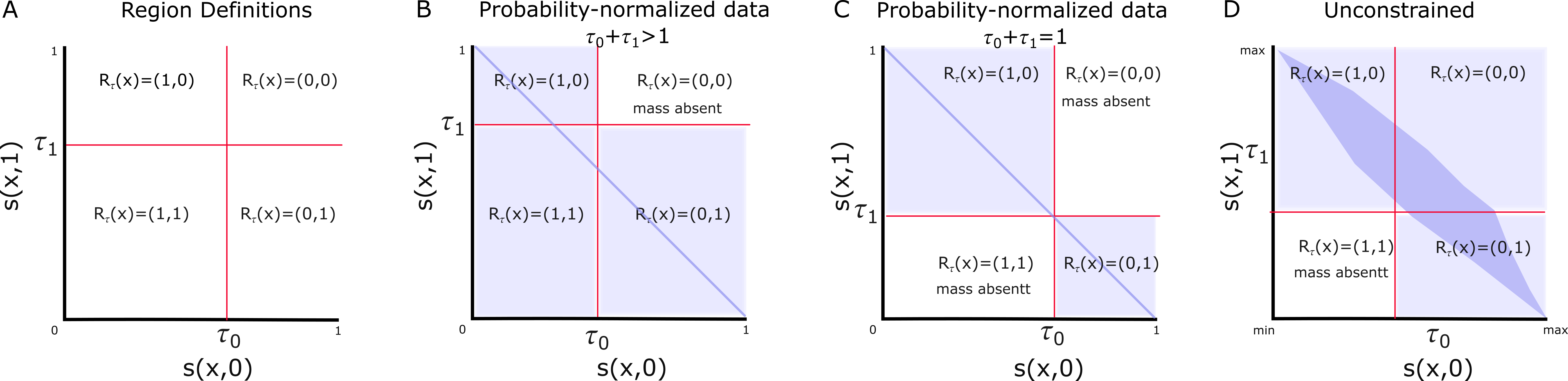}
  \caption{\textbf{Thresholds induce a finite region partition.} In the binary
  probability-normalized setting, the score support determines which regions can
  carry mass under a fixed pair of thresholds.}
  \label{fig:regions}
\end{figure}

Deployment is observed through the joint process $(R_\tau(X),Y)$. To make
downstream actions explicit, we also allow a fixed deployment policy
$\pi:\mathcal R\to 2^{\mathcal Y}$ and write
\[
\hat C_\pi(x) := \pi(R_\tau(x)).
\]
The standard conformal predictor is the set-inclusion policy
\[
\pi_{\mathrm{SI}}(r) := \{y\in\mathcal{Y} : r_y=1\},
\qquad
\hat C_{\pi_{\mathrm{SI}}}(x)=\{y\in\mathcal{Y} : s(x,y)\le\tau_y\}.
\]
As an example beyond set inclusion, consider a region-triggered commit policy:
in the binary case, choose a trigger set $A\subseteq\{0,1\}^2$ and an action
$a\in\{0,1\}$, then commit to $\{a\}$ when $R_\tau(x)\in A$ and abstain
otherwise. This simple template makes the calibration/action separation
concrete: calibration fixes the region partition, while policy chooses how to
act on the realized region. Additional binary projection masks can be found 
in Appendix~\ref{app:explicit_region_indicators}.

\begin{figure}[t]
  \centering
  \includegraphics[width=\linewidth]{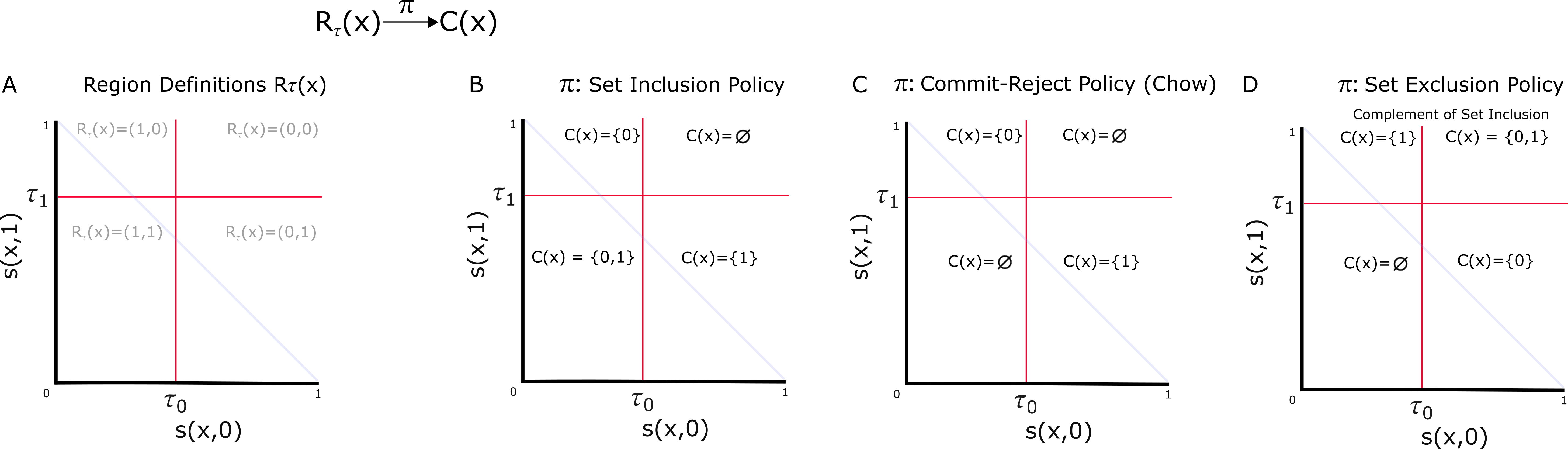}
  \caption{\textbf{Deployment policies as projections on a fixed region
  structure.} Different region-based policies act on the same calibrated
  partition but induce different reported outputs.}
  \label{fig:policy-projection}
\end{figure}

\subsection{Auditing primitives: region--class label tables and policy projections}
\label{sec:setting:targets}

Conditional on $\mathcal D_{\mathrm{cal}}$, the deployed rule is fixed, and the
primitive observable for auditing is the pair $(R_{\tau(\theta)}(X),Y)$ on an
exchangeable labeled split. Here $\theta\in\Theta$ indexes thresholds
$\tau(\theta)$ and hence the region map $R_{\tau(\theta)}$. Two linked objects
will be used throughout the paper.

\paragraph{(A) Region--class label joint table.}
Define the calibration-conditional joint probabilities
\[
p_{r,y}(\theta)
:=
\mathbb P\!\big(R_{\tau(\theta)}(X)=r,\ Y=y \mid \mathcal D_{\mathrm{cal}}\big),
\qquad (r,y)\in\mathcal R\times\mathcal Y.
\]
These are fixed but unknown constants for the deployed rule. The audit split
estimates them through
\[
K^{\mathrm{audit}}_{r,y}(\theta)
:=
\sum_{i=1}^{n_{\mathrm{audit}}}
\mathbf{1}\{R_{\tau(\theta)}(X_i^{a})=r,\ Y_i^{a}=y\},
\qquad
\widehat p^{\,\mathrm{audit}}_{r,y}(\theta)
=\frac{K^{\mathrm{audit}}_{r,y}(\theta)}{n_{\mathrm{audit}}}.
\]
Because the cells $(r,y)$ form a finite partition, the table sums to one and
changing $\theta$ reallocates mass across regions. Those conservation
constraints drive the geometric coupling studied later.

\paragraph{(B) Policy-specific indicators.}
Fix a deployment policy $\pi$. Many KPIs of interest can be written as a Bernoulli
indicator
\[
I_\ell := g_\ell(R_{\tau(\theta)}(X),Y;\pi)\in\{0,1\},
\qquad
p_\ell(\theta)
:=
\mathbb P\!\big(I_\ell=1\mid \mathcal D_{\mathrm{cal}}\big).
\]
For example,
$g_{\mathrm{abs}}(r,y;\pi)=\mathbf 1\{\pi(r)=\varnothing\}$ and
$g_{\mathrm{err}}(r,y;\pi)=\mathbf 1\{|\pi(r)|=1,\ y\notin\pi(r)\}$.
The corresponding audit count is
\[
K^{\mathrm{audit}}_{\ell}(\theta)
:=
\sum_{i=1}^{n_{\mathrm{audit}}}
g_\ell(R_{\tau(\theta)}(X_i^{a}),Y_i^{a};\pi),
\qquad
\widehat p^{\,\mathrm{audit}}_{\ell}(\theta)
=\frac{K^{\mathrm{audit}}_{\ell}(\theta)}{n_{\mathrm{audit}}}.
\]
Thus the region--class label table is the reusable audit object, and KPI rates
are its projections.

\paragraph{Projection identity.}
\label{sec:setting:linear}
The table and KPI views are linked by linearity:
\begin{equation}
\label{eq:projection_identity}
K^{\mathrm{audit}}_{\ell}(\theta)
=
\sum_{r\in\mathcal R}\sum_{y\in\mathcal Y}
g_\ell(r,y;\pi)\,K^{\mathrm{audit}}_{r,y}(\theta),
\qquad
\widehat p^{\,\mathrm{audit}}_{\ell}(\theta)
=\frac{K^{\mathrm{audit}}_{\ell}(\theta)}{n_{\mathrm{audit}}}.
\end{equation}
Taking expectation gives
\[
p_\ell(\theta)
=
\sum_{r\in\mathcal R}\sum_{y\in\mathcal Y}
g_\ell(r,y;\pi)\,p_{r,y}(\theta).
\]
In words: estimate the table once, then obtain policy-level KPIs by projection.

\paragraph{Worked example: coverage as a projection of the region--class label table.}
Under set inclusion, coverage is the sum of the region--class label cells where
the true label lies in the predicted set. Consider $\mathcal Y=\{0,1\}$ and the
set-inclusion policy $\pi_{\mathrm{SI}}$. For fixed thresholds, each input falls
into one of four regions $\mathcal R=\{r_{10},r_{11},r_{01},r_{00}\}$, and the
primitive auditing object is $(R_\tau(X),Y)\in\mathcal R\times\mathcal Y$. The
corresponding region--class label table is
\[
P(\theta):
\begin{array}{c|ccc}
 & y=0 & y=1 & \hat C_{\pi_{\mathrm{SI}}}(X) \\ \hline
r_{10}=(1,0) & \mathbf{p_{10,0}(\theta)} & p_{10,1}(\theta) & \{0\}\\
r_{11}=(1,1) & \mathbf{p_{11,0}(\theta)} & \mathbf{p_{11,1}(\theta)} & \{0,1\}\\
r_{01}=(0,1) & p_{01,0}(\theta) & \mathbf{p_{01,1}(\theta)} & \{1\}\\
r_{00}=(0,0) & p_{00,0}(\theta) & p_{00,1}(\theta) & \varnothing
\end{array}
\qquad\text{with}\qquad
\sum_{r\in\mathcal R}\sum_{y\in\mathcal Y}p_{r,y}(\theta)=1.
\]
Under $\pi_{\mathrm{SI}}$, the event $\{Y\in \hat C_{\pi_{\mathrm{SI}}}(X)\}$
corresponds to the four bold cells $(r_{10},0)$, $(r_{11},0)$, $(r_{11},1)$,
and $(r_{01},1)$, hence
\[
p_{\mathrm{cov}}(\theta)
=
p_{10,0}(\theta)+p_{11,0}(\theta)+p_{11,1}(\theta)+p_{01,1}(\theta)
=
1-\big(p_{10,1}(\theta)+p_{01,0}(\theta)+p_{00,0}(\theta)+p_{00,1}(\theta)\big).
\]
The calibration choices $(\alpha_0^\star,\alpha_1^\star)$ determine the
class-conditional miscoverage primitives
$p_{10,1}(\theta)+p_{00,1}(\theta)$ for $Y=1$ and
$p_{01,0}(\theta)+p_{00,0}(\theta)$ for $Y=0$, while the decomposition of these
totals into wrong-singleton errors versus abstentions is dictated by the
deployment distribution. This makes explicit that (i) coverage is a projection,
(ii) once thresholds are fixed it is determined by how mass is redistributed
across regions, and (iii) deployment-facing consequences are not determined by
the calibration targets alone.

The same ``sum selected cells'' structure applies to abstention/deferral,
decisiveness, and decisive error exposure under any fixed policy. Appendix~
\ref{app:explicit_region_indicators} records additional binary projection masks
and related bookkeeping.

      \section{Operational quantities as first-class objects}
\label{sec:operational-quantities}

Deployment behavior is summarized by operational rates induced by a calibrated
partition together with a fixed deployment policy~$\pi$. The purpose of this
section is to separate what is certifiable for a fixed deployed rule from what
is useful only for planning across candidate rules. We first fix the coverage
semantics of the deployed rule through SSBC, then turn to audit-based inference
and comparative planning at that fixed rule. We keep three layers distinct:
geometry (the induced regions), policy (the projection from regions to
outputs), and rates (the auditable deployment frequencies).

\subsection{Calibrate--and--Audit}
\label{sec:operational-quantities:calibrate_audit}

Beyond marginal coverage, finite-sample evaluation of operational rates
requires an independent audit split. If $\mathcal D_{\mathrm{cal}}$ is reused
both to choose thresholds and to evaluate downstream indicators, the resulting
event counts are no longer Bernoulli draws conditional on a fixed rule.
Calibrate--and--Audit avoids this coupling by freezing thresholds on
$\mathcal D_{\mathrm{cal}}$ and evaluating operational quantities on
$\mathcal D_{\mathrm{audit}}$ only.

When an audit split is unavailable, we use leave-one-out (LOO) recalibration as
a practical proxy. Throughout the paper, that proxy is used only for
exploration and scenario planning, not for the exact fixed-rule guarantees.
The structural issue is that calibration reuse makes the threshold random with
respect to the same observations used for evaluation, inducing dependence
between selection and KPI counts rather than conditionally i.i.d.\ Bernoulli
trials. Appendix~\ref{app:single_sample_structural_coupling} gives the
covariance argument and the details of the LOO construction.

\subsection{SSBC as a calibration navigation coordinate}
\label{sec:operational-quantities:ssbc_navigation}

We use coverage as the semantic anchor for navigating the calibration grid. Split conformal
calibration lives on a finite grid indexed by
$u\in\{1,\ldots,n_{\mathrm{cal}}\}$ with
$\alpha_{\mathrm{grid}}=u/(n_{\mathrm{cal}}+1)$; see
Section~\ref{subsec:scores-thresholds}. SSBC maps $(\alpha^\star,\delta)$ to the
least conservative admissible index $u^\star(\alpha^\star,\delta)$ satisfying
\[
\mathbb{P}_{\mathcal{D}_{\mathrm{cal}}}\!\left(
\mathbb{P}\!\big(Y \in \hat C(X)\mid \mathcal{D}_{\mathrm{cal}}\big)\ge 1-\alpha^\star
\right)\ge 1-\delta.
\]
Thus SSBC turns a semantic request into a concrete deployed threshold choice. In the binary class-conditional setting this yields
\[
(u_0^\star,u_1^\star)
=
\big(u^\star(\alpha_0^\star,\delta_0),\,u^\star(\alpha_1^\star,\delta_1)\big),
\]
which determine $(\tau_0,\tau_1)$ and therefore the calibration setting
$\theta$. Although the user-facing request is four-dimensional,
$(\alpha_0^\star,\delta_0,\alpha_1^\star,\delta_1)$, calibration collapses it
to a low-dimensional navigation coordinate on the deployed grid. The later
planning objects are therefore indexed by a coverage-semantically meaningful
choice of deployed rule rather than by an abstract threshold sweep.

\subsection{Audit-based predictive envelopes for future windows}
\label{sec:operational-quantities:certified_envelopes}

Under Calibrate--and--Audit, conditional on $\mathcal D_{\mathrm{cal}}$ the
deployed rule is fixed and audit set is exchangeable with future deployment
points. Therefore, for any fixed KPI indicator $I_\ell$,
\[
K^{\mathrm{audit}}_{\ell}(\theta)\mid \mathcal D_{\mathrm{cal}}
\sim
\mathrm{Binomial}\!\big(n_{\mathrm{audit}}, p_\ell(\theta)\big),
\]
and for a future window of size $m$,
\[
K^{m}_{\ell}(\theta)\mid \mathcal D_{\mathrm{cal}}
\sim
\mathrm{Binomial}\!\big(m, p_\ell(\theta)\big).
\]
These Binomial laws support exact fixed-rule inference for the latent rate
$p_\ell(\theta)$; for example, Clopper--Pearson intervals \citep{clopper1934use} are valid from the
audit count $K^{\mathrm{audit}}_{\ell}(\theta)$. For finite future windows we
use Beta--Binomial envelopes as practical planning summaries
\citep{Skellam1948BinomialMixture,JohnsonKotzBalakrishnan1997,Gelman2013BDA}. 

\subsection{Rate vectors, attainable operational sets, and Pareto filtering}
\label{sec:operational-quantities:rates_interface}

The guarantees above are \emph{pointwise}: they certify a pre-declared KPI at a
fixed deployed rule, not an adaptively selected member of a sweep. Planning is
comparative. For a calibration family $\Theta$, a fixed policy $\pi$, and a KPI
list, each setting $\theta\in\Theta$ determines an operational profile
\[
\mathbf p(\theta)=\big(p_1(\theta),\ldots,p_L(\theta)\big),
\]
whose coordinates are obtained by projection from the same region--class label
table via \eqref{eq:projection_identity}. Sweeping $\theta$ traces the
attainable set
\[
\mathcal V(\Theta;\pi)
:=
\{\mathbf p(\theta):\theta\in\Theta\}
\subset \mathbb R^L.
\]
Because the underlying table obeys conservation constraints, changing $\theta$
reallocates mass across regions rather than tuning rates independently.

To summarize planning-relevant regimes without committing to a scalar objective,
we use an oriented Pareto filter. After declaring which KPIs are preferable to
increase or decrease, a setting is \emph{nondominated} if no other setting is
at least as good in every oriented coordinate and strictly better in one. These
fronts are planning objects; exact finite-sample guarantees remain pointwise
for fixed settings and fixed KPIs. Section~\ref{sec:geometric_manifolds} now
makes precise why these attainable sets are geometrically constrained and why
sweeping SSBC-indexed settings produces coupled, regime-dependent trade-offs.
Any later scalar objective that is monotone in the chosen orientation must
attain its optimum on this front \citep{Miettinen1999NonlinearMultiobjective}. The follow-on question is inverse pricing: which
downstream cost ratios justify a fixed action convention at a Pareto-relevant
regime? Further derivations are deferred to Appendices~\ref{app:ssbc}
and~\ref{sec:alpha_vs_delta}.

      \section{Consequences of a fixed conformal partition in the binary case}
\label{sec:geometric_manifolds}
\noindent
This section isolates the main binary geometric fact behind the operational
trade-offs. Once thresholds are fixed, deployment behavior is governed by the
joint region--class table $(R_\tau(X),Y)$. Under probability-normalized scores,
the threshold pair $(\tau_0,\tau_1)$ cannot move operational KPIs
independently; it reallocates mass across a small number of region types.
Equivalently, this section makes precise the conservation-constraint intuition
behind the attainable sets and Pareto filtering introduced in
Section~\ref{sec:operational-quantities:rates_interface}. In the binary case
this yields three practical implications: coupled feasibility, region-wise
policy dependence, and decision optimality determined by the audited table
rather than by coverage alone.
Appendix~\ref{app:geometry_binary_partition} contains the full construction and
proofs, while Appendix~\ref{app:cost_pricing_envelopes} develops the
decision-theoretic consequences.

\subsection{Binary conformal geometry and regime boundaries}
\label{sec:geometric_manifolds:binary_conformal_partitions}

Let $\mathcal Y=\{0,1\}$ and fix class-conditional thresholds
$\tau=(\tau_0,\tau_1)$. Once calibrated, the object that governs operational
characteristics in deployment is the region label
\[
R_\tau(x)
:=
\big(\mathbf{1}\{s(x,0)\le \tau_0\},\ \mathbf{1}\{s(x,1)\le \tau_1\}\big)
\in\{0,1\}^2,
\]
with outcomes $10,11,01,00$, which under the set inclusion policy $\pi_{\mathrm{SI}}$ correspond to singleton-$0$, hedge, 
singleton-$1$, and abstention, respectively.

\paragraph{Probability-normalized scores induce a regime boundary.}
For probability-normalized scores $s(x,y)=1-P(y\mid x)$ we have
$s(x,0)+s(x,1)=1$. Hence a point cannot satisfy both class thresholds unless
$\tau_0+\tau_1\ge 1$, and it cannot violate both unless $\tau_0+\tau_1\le 1$.
Consequently:
\begin{itemize}
  \item \textbf{Hedging regime ($\tau_0+\tau_1>1$):} $11$ may occur and $00$ cannot
  (singletons + hedges; no abstention).
  \item \textbf{Rejection regime ($\tau_0+\tau_1<1$):} $00$ may occur and $11$ cannot
  (singletons + abstention; no hedges).
  \item \textbf{Boundary ($\tau_0+\tau_1=1$):} only $10$ and $01$ occur; under
  $\pi_{\mathrm{SI}}$ outputs are always singletons.
\end{itemize}
Crossing $\tau_0+\tau_1=1$ therefore changes which outcome types are even
feasible. This deterministic boundary is the simplest geometric reason that
changing thresholds can produce qualitatively different operational regimes.

\paragraph{Cross-threshold coupling within regimes.}
Within the hedging regime ($\tau_0+\tau_1>1$), the diagonal support is partitioned
into three contiguous intervals. Parameterize by $u=s(x,0)$ (so $s(x,1)=1-u$):
\[
u\in[0,1-\tau_1)\ \Rightarrow\ R_\tau(x)=10,\qquad
u\in[1-\tau_1,\tau_0]\ \Rightarrow\ R_\tau(x)=11,\qquad
u\in(\tau_0,1]\ \Rightarrow\ R_\tau(x)=01.
\]
Thus region boundaries are governed by \emph{opposing} thresholds: changing
$\tau_1$ moves the boundary controlling singleton-$0$ mass, while changing
$\tau_0$ moves the boundary controlling singleton-$1$ mass. Thresholds are
therefore mass-reallocation boundaries, not independent class-wise knobs.

For planning, the key consequence is that sweeping calibration settings can
cross regime boundaries and induce discontinuous changes in which outputs are
even feasible. This is the geometric source of the coupled attainable sets and
Pareto fronts seen later in the experiments.

\subsection{Region observability and interface-relative decision optimality}
\label{sec:geometric_manifolds:posterior_view}

Fix $\tau$ and treat the region label $R_\tau(X)\in\{0,1\}^2$ as the deployed
observable. The joint region--class label probabilities
\[
p_{r,y}
:=
\mathbb P\!\left(R_\tau(X)=r,\ Y=y\mid\mathcal D_{\mathrm{cal}}\right),
\qquad r\in\{0,1\}^2,\ y\in\{0,1\},
\]
fully characterize the information available to any downstream rule that uses
only this interface. Write $p_r:=p_{r,0}+p_{r,1}$ for region mass and define the
within-region label frequency
\[
\eta_r:=\Pr(Y=1\mid R_\tau(X)=r,\ \mathcal D_{\mathrm{cal}})
=\frac{p_{r,1}}{p_r}\quad (p_r>0).
\]
Because the interface takes finitely many values, both evaluation and
decision-making reduce to functions of the same table $\{p_{r,y}\}$ (equivalently
$\{p_r,\eta_r\}$). In particular, a region-based action convention is justified
by the \emph{within-region} label composition, not by marginal coverage alone
and not by the set-valued output alone.

Formally, let $\mathcal A$ be an action set and let $L(a,y)$ denote the loss of
taking action $a\in\mathcal A$ when $Y=y$. A region-based policy
$\tilde\pi:\{0,1\}^2\to\mathcal A$ is \emph{decision-optimal relative to the
conformal interface} if, for every populated region $r$,
\[
\tilde\pi(r)\in\arg\min_{a\in\mathcal A}
\mathbb E\!\left[L(a,Y)\mid R_\tau(X)=r,\ \mathcal D_{\mathrm{cal}}\right].
\]
In words, once the deployed interface reveals only the region label $r$, the
optimal action is the one with smallest expected loss under the label mix within
that region. This optimality is determined region-wise by the within-region label frequency $\eta_r$.

For the main text, the key implication is simply that the same audited
region--class table used for operational planning also determines whether a
fixed region-based convention is rational under a stated cost model. The full
Chow-style inverse-pricing analysis is deferred to
Appendix~\ref{app:cost_pricing_envelopes}, but two qualitative consequences are
worth flagging here: the resulting inverse-pricing envelope is polyhedral in
cost-ratio coordinates, and coverage-matched settings can induce different, even
disjoint, pricing envelopes because the region composition changes.

      \section{Results: evidence for coverage semantics, audit behavior, and planning}
\label{sec:applications}
\label{sec:experiments:setup}

This section keeps the empirical story at the same scope as the theory. We
validate SSBC's coverage semantics in simulation, use a toxicology dataset as an audit-based
example of fixed-rule operational evaluation, and close with a solubility
scenario-planning illustration in which the model is fixed and only the
calibration layer is varied.

\subsection{Numerical simulations: coverage semantics first, operational summaries second}

\subsubsection{Coverage: numerical realization of SSBC guarantees}
\label{sec:applications:simulations}
\label{sec:applications:ssbc}
\label{sec:coverage_control}

We first isolate the finite-sample law underlying SSBC. Calibration
nonconformity scores are drawn i.i.d.\ from a continuous heavy-tailed reference
distribution, and we compare nominal split conformal, a one-sided DKWM
correction \citep{Massart1990DKWM,dvoretzky1956asymptotic,Vovk2012PAC}, and SSBC
at $(\alpha^\star,\delta)=(0.10,0.10)$ over a finite deployment window of size
$m_{\mathrm{infer}}=100$.

Nominal split conformal under-controls calibration-conditional risk in this
finite-window view, whereas DKWM enforces the target through strong
conservatism. Across representative calibration sizes, nominal split conformal
produces violation probabilities far above the requested level, DKWM drives the
same probabilities well below target by substantial conservatism, and SSBC
tracks the intended finite-window semantics much more closely up to unavoidable
grid effects. For example, at $n_{\mathrm{cal}}=100$ the observed coverage violation
rate is $0.4075$ for nominal split conformal, $0.0004$ for DKWM, and $0.0960$
for SSBC, close to the target $\delta=0.10$. The full calibration-size grid
with theoretical and observed violation rates is reported in Appendix~\ref{app:ssbc},
Table~\ref{tab:calibration_conditional_violations}.

\subsubsection{Operational rate summaries: LOO versus two-sample audit reference}
\label{sec:applications:simulations_envelopes}
\label{sec:applications:two_sample_reference}

We next study operational quantities beyond coverage. The goal is to compare
the two-sample Calibrate--and--Audit constructed Beta--Binomial predictive
summaries to a single-sample LOO surrogate intended for feasibility
exploration.

\paragraph{Synthetic probability model.}
We use a controlled binary probability model. Each draw produces
$Y\in\{0,1\}$ with $\mathbb P(Y=1)=p_{\mathrm{class}}$ for
$p_{\mathrm{class}}\in\{0.10,0.50\}$, and
\[
P_1 \mid Y=1 \sim \mathrm{Beta}(a,b),
\qquad
P_1 \mid Y=0 \sim \mathrm{Beta}(2,7),
\]
with $(a,b)\in\{(4,3),(9,3)\}$. We then set $(P_0,P_1)=(1-P_1,P_1)$ and use
class-conditional scores $S_y:=1-P_y$.

\paragraph{Two-sample reference and LOO surrogate.}
For each configuration, we draw independent datasets $\mathcal D_1$ and
$\mathcal D_2$ of size $N=500$, calibrate on $\mathcal D_1$ with
SSBC-adjusted thresholds at $(\alpha,\delta)=(0.10,0.10)$, freeze the rule,
and evaluate operational indicators on $\mathcal D_2$. This yields the
two-sample audit reference. As a single-sample surrogate, we recompute
thresholds leave-one-out, pool the resulting indicators, and map them to
planning envelopes, optionally widened by controlled pessimization
(Appendix~\ref{app:single_sample_structural_coupling}).

Figure~\ref{fig:operational_interval_alignment} shows that, in these simulated
geometries, LOO-based envelopes align closely with the two-sample reference,
while inflation widens intervals without shifting centers. This supports LOO as
a practical planning proxy when an explicit audit split is unavailable, while
keeping the main operational evidence tied to the two-sample design.

\begin{figure}[H]
    \centering
    \includegraphics[width=0.85\textwidth]{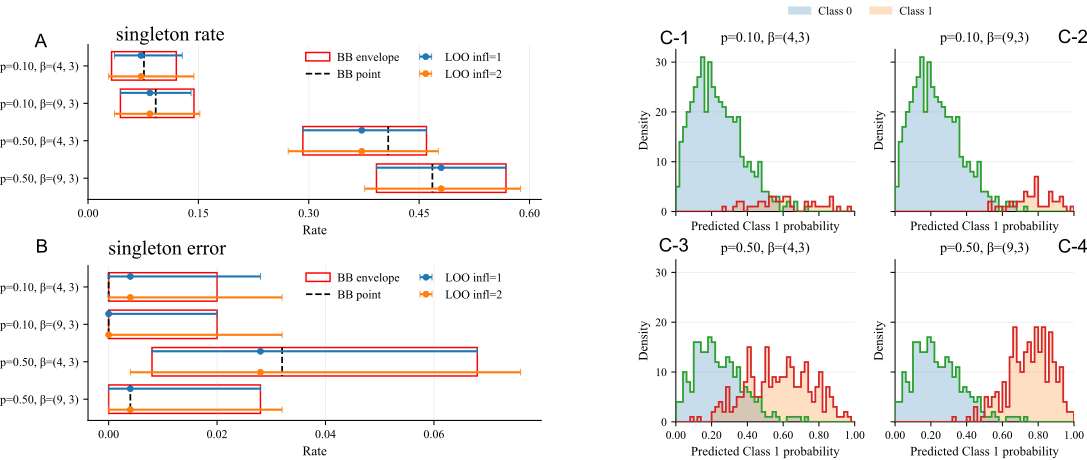}
    \caption{Operational rate envelopes and score geometry.
    \textbf{(A)} Singleton rate and \textbf{(B)} singleton error for two class
    prevalences and two class~1 generating distributions, with class~0 drawn
    from $\mathrm{Beta}(2,7)$. Red rectangles denote the two-sample
    Beta--Binomial future-window summary and the dashed vertical line its
    center; blue and orange intervals show leave-one-out envelopes under two
    inflation levels. \textbf{(C1--C4)} Histograms of predicted class~1
    probability by true class make explicit how score geometry shapes singleton
    mass, singleton error, and the asymmetry of the envelopes.}
    \label{fig:operational_interval_alignment}
\end{figure}

\subsection{Tox21: an empirical example of the audit-based operational view}
\label{sec:applications:tox21}
\label{sec:applications:envelopes}

Tox21 \citep{Mayr2016DeepTox,Huang2016Tox21} stress-tests the framework under
severe class imbalance, where class-conditional calibration counts can be small.
Across twelve endpoints and 100 random train/calibration/audit splits, we
compare nominal split conformal, DKWM, and SSBC at $(\alpha,\delta)=(0.10,0.10)$.
Dataset composition and representative endpoint-level operational tables are
deferred to Appendix~\ref{app:tox21}.

The aggregate calibration-conditional picture follows the same trend as the numerical simulation:
nominal split conformal again shows elevated coverage violation in this
conditional small-$n$ regime, DKWM suppresses violations through strong
conservatism, and SSBC lands much closer to the intended finite-sample
semantics while avoiding DKWM's excess inflation. The set-size trade-off is
similarly clear: mean set size is $1.41$ for nominal split conformal, $1.78$
for DKWM, and $1.54$ for SSBC, with singleton frequency ordered in the opposite
direction. The aggregated coverage and set-size summary is reported in
Appendix~\ref{app:tox21}, Table~\ref{tab:tox21-summary}.

\subsubsection{Operational summaries on an independent evaluation split}

With coverage semantics fixed by SSBC at $(\alpha,\delta)=(0.10,0.10)$, we examine the induced region--class
summaries on an independent audit split. The point of the endpoint-level table
is different from the aggregate coverage table above: it shows what the fixed
rule actually does in deployment-facing KPI terms and how closely the LOO proxy
tracks that held-out operational picture.

\begin{table}[ht]
    \centering
    \caption{\textbf{Representative Tox21 endpoint: SR-MMP.} Joint rates are
    normalized by the endpoint test-set size. The $\mathcal{D}_{1}$ LOO columns provide planning
    summaries (Point Estimate PE and 95\% Prediction Interval PI) from calibration data, while the $\mathcal{D}_2$ columns report
    the independent audit reference and its predictive interval.}
    \label{tab:sr-mmp-operational}
    \footnotesize
    \begin{tabular}{llcccc}
    \hline
    Operational quantity & Class & LOO PE $\mathcal{D}_{1}$ & LOO 95\% PI $\mathcal{D}_{1}$ & PE $\mathcal{D}_{2}$ & BB 95\% PI $\mathcal{D}_{2}$\\
    \hline
    Singleton rate & Class 0 & $0.662$ & $[0.604,0.718]$ & $0.651$ & $[0.615,0.686]$ \\
                    & Class 1 & $0.130$ & $[0.092,0.173]$ & $0.117$ & $[0.094,0.142]$ \\
    Doublet rate & Class 0 & $0.163$ & $[0.121,0.211]$ & $0.201$ & $[0.172,0.231]$ \\
                 & Class 1 & $0.045$ & $[0.023,0.074]$ & $0.031$ & $[0.020,0.046]$ \\
    Wrong-singleton rate & Class 0 & $0.073$ & $[0.044,0.107]$ & $0.070$ & $[0.052,0.090]$ \\
                         & Class 1 & $0.012$ & $[0.002,0.030]$ & $0.011$ & $[0.005,0.020]$ \\
    \hline
    \end{tabular}
\end{table}

For SR-MMP, the LOO proxy and the independent audit split agree closely on the
main operational picture: most mass lies in singleton predictions for class~0,
class~1 singleton mass is smaller but still stable, and wrong-singleton rates
remain low relative to total singleton mass. This is the intended role of the
endpoint table in the paper: not to certify the LOO proxy, but to show that the
audit-based operational view yields a compact, interpretable deployment summary
at a fixed rule. A second endpoint illustrating a lower-prevalence regime is
kept in Appendix~\ref{app:tox21}.

\subsection{Solubility: scenario planning once coverage is fixed}
\label{sec:applications:solubility}

The solubility case study is a planning illustration of the attainable
operational KPI trade-offs. We train a fixed model on AquaSolDB
\citep{Sorkun2019AqSolDB}, restrict calibration to a lipophilic deployment
scenario, and use the LOO planning interface on the scenario-restricted sample
to expose the feasible operating regimes once coverage semantics have been
fixed.
Figure~\ref{fig:solubility-tradeoff} summarizes the resulting planning
interface. The left panel shows that sweeping SSBC settings does not induce a
simple monotone trade-off between soluble-class exclusion, deferral, and
decisive-correct mass; instead it yields a constrained attainable set with a
small set of Pareto-relevant regimes. The right panel shows the associated
inverse-pricing screen for a fixed downstream convention, illustrating that the
same Pareto-relevant regimes need not remain decision-optimal under the same
cost-ratio assumptions.

\begin{figure}[!b]
  \centering
  \includegraphics[width=\textwidth]{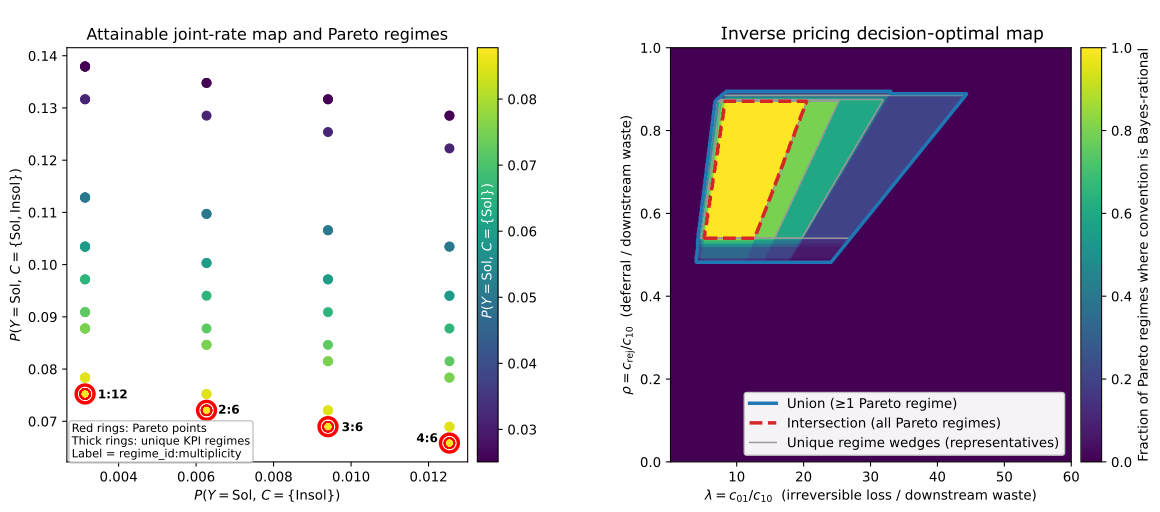}
  \caption{\textbf{Solubility planning and inverse pricing.} Left:
  attainable planning regimes induced by sweeping SSBC settings on a restricted
  deployment scenario. Right: cost-ratio regions for which a fixed downstream
  convention is decision-optimal relative to the conformal interface.}
  \label{fig:solubility-tradeoff}
\end{figure}

\FloatBarrier

The main message is qualitative: once coverage semantics are fixed, the
same conformal interface can support multiple planning regimes with materially
different operational profiles. Detailed KPI tables for selected 
Pareto-optimal regimes, scenario construction, and
parameter diagnostics are listed in Appendix~\ref{app:solubility}.

      \section{Discussion \& Conclusions}
\label{sec:discussion}
For a fixed deployed binary split-conformal rule under exchangeability,
coverage and deployment behavior are not the same object. SSBC gives a
finite-sample semantic interpretation of the coverage request, and
Calibrate--and--Audit evaluates the resulting fixed rule through the induced
region--class table. That table is the reusable deployment summary in this
setup: it supports KPI estimation, planning views, and downstream
decision-theoretic questions without retraining the base model.

This is the paper's intended division of labor. SSBC answers what coverage
claim is being made about the realized deployed rule; Calibrate--and--Audit
answers what that same fixed rule is likely to do operationally over a finite
window. The geometric analysis explains why both layers are needed: thresholds
reallocate mass across a fixed partition, so operational KPIs are coupled.

The simulations and case studies should be read at exactly that scope: the simulations
validate the intended SSBC semantics under calibration randomness, the Tox21
study shows that an independent audit split yields interpretable fixed-rule KPI
summaries in a realistic small-sample regime, and the AquaSolDB example shows
how the same interface can be used for planning once coverage semantics are
fixed. Taken together, these experiments support a division of labor that we
regard as practically useful: use SSBC to decide what coverage claim is being
made about the deployed rule, then use an audit table to understand what that
rule is likely to do operationally.

This viewpoint also sharpens a broader conceptual point. A conformal output is
not fully characterized, for deployment purposes, by its marginal coverage or
by its set-valued form alone. What matters operationally is how calibrated
regions align with labels in the deployment distribution. That is why
coverage-matched settings can still differ in commitment, deferral, decisive
error exposure, and downstream cost compatibility. In this sense, the
region--class table is not auxiliary bookkeeping; it is the minimal object in
our setup that connects a fixed conformal interface to operational and
decision-theoretic consequences. This also gives a concrete response to the HCI
concern that conformal sets can be a ``murky'' interface: the set alone is often
not enough, but the set paired with its audited region--class table makes the
deployment-facing consequences of acting on that interface explicit.

The scope remains intentionally narrow. We treat the score model as fixed,
focus on binary classification, and rely on exchangeability. The exact
finite-sample claims apply to SSBC's coverage semantics and to
independent-audit inference for fixed operational rates; the LOO interface is
used only as a planning proxy, and attainable-set sweeps and Pareto-front views
are exploratory rather than simultaneous certification statements. Extending
the same viewpoint to multi-class outputs, structured predictions, richer
abstention policies, adaptive policies, or shift-aware settings remains future
work. The most serious practical limitation is distribution or data drift:
once exchangeability fails, the finite-sample guarantees need not hold, so
deployment would require monitoring, recalibration, or shift-aware extensions 
\citep{Fannjiang2022FeedbackCovariateShift}.
Our aim is to make the fixed-rule deployment picture explicit before those
complications are layered in.

      \section*{Acknowledgements}      
      This work was supported by the U.S. Department of Energy, Office of Science,
      under Contract No.~DE-AC02-05CH11231 as part of the Laboratory Directed Research 
      and Development (LDRD) program. Additional support came in part from the U.S. Department 
      of Energy, Office of Science, Scientific Discovery through Advanced Computing (SciDAC) program 
      FORUM-AI for first-principles calculations and AI model development.

      \clearpage
      \bibliographystyle{tmlr}
      \bibliography{ssbc}
            
      \clearpage

      \appendix
\section{Appendix Outline \& Notation}
\label{app:notation}
This appendix serves as a notation guide and roadmap for the technical
appendices that follow. The appendices are, in order:
\begin{itemize}
  \item \textbf{A. Appendix Outline \& Notation} (this appendix): roadmap and
  shared notation for calibration, SSBC, and operational analysis.
  \item \textbf{B. Finite-sample distribution of calibration-conditional
  coverage:} pivot and distribution of coverage given the calibration draw;
  basis for coverage-conditional guarantees.
  \item \textbf{C. Small-Sample Beta Correction (SSBC):} mapping
  $(\alpha^\star,\delta)$ to a calibration grid index; finite-sample coverage
  semantics for the deployed rule.
  \item \textbf{D. Single-sample structural coupling, LOO decoupling, and
  planning envelopes:} why an independent audit split is needed for
  certification; leave-one-out surrogate and envelope construction when no audit
  split is available.
  \item \textbf{E. Binary conformal partitions:} four-region structure,
  regime boundary ($\tau_0+\tau_1=1$), and recovery of the argmax and Bayes 
  optimal classification rule under asymmetric costs.
  \item \textbf{F. Conformal interface-relative decision optimality and inverse
  pricing:} cost-ratio conditions for decision-optimal action; Chow-style
  accept/reject and pricing envelopes.
  \item \textbf{G. Explicit region indicators and projection masks:} formulas
  for region--label counts and projections used in auditing and KPI computation.
  \item \textbf{H. Tox21 supplementary details:} dataset, splits, and operational
  summaries for the Tox21 empirical example.
  \item \textbf{I. Solubility supplementary details:} dataset, scenario
  restriction, Pareto-front parameter roles, and $\alpha$ vs.\ $\delta$
  interpretation for the solubility case study.
\end{itemize}

\noindent
The notation below is shared across split conformal calibration, the
Small-Sample Beta Correction (SSBC), and finite-window prediction of operational
rates. The analysis mixes three layers: (i) discrete split-conformal thresholding
via order statistics on calibration scores; (ii) SSBC grid selection and
calibration-conditional coverage semantics; and (iii) predictive envelopes for
deployment-facing operational rates (Calibrate-and-Audit and a single-sample
leave-one-out (LOO) surrogate). To avoid collisions, we reserve $(k,u)$
exclusively for split conformal thresholding. Table~\ref{tab:notation} summarizes the key notation.
\clearpage

\begin{table}[htbp]
\centering
\caption{Key notation used across calibration, SSBC, and operational analysis.}
\label{tab:notation}
\footnotesize
\renewcommand{\arraystretch}{1.02}
\begin{tabular}{@{}l>{\raggedright\arraybackslash}p{3.4cm}>{\raggedright\arraybackslash}p{5.8cm}@{}}
\toprule
Symbol & Meaning & Scope / Remarks \\
\midrule
\multicolumn{3}{l}{\textit{Data splits and window sizes}} \\
\midrule
$\mathcal D_{\mathrm{cal}}$, $n_{\mathrm{cal}}$
& Calibration dataset and size
& Exchangeable sample for conformal thresholds. \\

$\mathcal D_{\mathrm{audit}}$, $n_{\mathrm{audit}}$
& Audit dataset and size
& Exchangeable sample for region--class label and operational rate estimation. \\

$m$
& Future window size
& Number of future cases for realized rates and predictive envelopes. \\

\midrule
\multicolumn{3}{l}{\textit{Region--policy--audit}} \\
\midrule
$R_\tau(x)$, $R_{\tau(\theta)}$
& Region map
& Partition of score space; $R_\tau:\mathcal X\to\{0,1\}^K$; fixed by thresholds $\tau(\theta)$. \\

$\pi$
& Deployment policy
& Maps region $R_\tau(x)$ to reported output (e.g., prediction set or $\varnothing$). \\

$\theta$, $\Theta$
& Calibration setting and space
& Indexes deployed thresholds $\tau(\theta)$ and region map $R_{\tau(\theta)}$. \\

\midrule
\multicolumn{3}{l}{\textit{Split conformal and SSBC}} \\
\midrule
$s(x,y)$, $S_i$, $S_{(k)}$
& Score, calibration scores, order statistic
& $\tau=S_{(k)}$; $k,u$ with $u=n_{\mathrm{cal}}+1-k$ index the conformal grid. \\

$\alpha_{\mathrm{grid}}$, $\alpha_{\mathrm{adj}}$
& Grid and SSBC-selected miscoverage
& $\alpha_{\mathrm{grid}}=u/(n_{\mathrm{cal}}+1)$; SSBC returns $\alpha_{\mathrm{adj}}=u^\star/(n_{\mathrm{cal}}+1)$. \\

$\alpha^\star$, $\delta$
& Target miscoverage, confidence
& User request; $\delta$ controls tail probability over calibration draws. \\

$p_{\mathrm{cov}}$, $\widehat C_m$
& Coverage probability, empirical coverage
& $p_{\mathrm{cov}}\sim\mathrm{Beta}(k,u)$; $\widehat C_m=\frac{1}{m}\sum_{j=1}^m \mathbf 1\{Y'_j\in C(X'_j)\}$. \\

\midrule
\multicolumn{3}{l}{\textit{Operational indicators and LOO}} \\
\midrule
$C(X)$, $g_\ell$, $I_\ell$
& Prediction set, event functional, indicator
& $I_\ell=g_\ell(R_{\tau(\theta)}(X),Y;\pi)$; KPIs are sums over selected region--class cells. \\

$Z_{i,j}$, $k_{\mathrm{pool},j}$, $\widehat r^{\mathrm{LOO}}_j$
& LOO indicator, pooled count, LOO rate
& $Z_{i,j}=g_j(C_{-i}(X_i),Y_i)$; $k_{\mathrm{pool},j}=\sum_i Z_{i,j}$; $\widehat r^{\mathrm{LOO}}_j=k_{\mathrm{pool},j}/n$. \\

$\texttt{infl}$, $n_{\mathrm{eff}}$
& Inflation (pessimization)
& $n_{\mathrm{eff}}=n/\texttt{infl}$ widens LOO envelopes (Appendix~\ref{app:single_sample_structural_coupling}); larger \texttt{infl} yields wider intervals. \\
\bottomrule
\end{tabular}
\end{table}

      \clearpage

\section{Finite-sample distribution of calibration-conditional coverage}
\label{app:coverage_pivot}

\noindent
This appendix derives a finite-sample characterization of the
\emph{calibration-conditional coverage} of a fixed split conformal predictor
under exchangeability. Coverage is a pure \emph{rank}
event: the future true-label score is compared to a calibration order statistic.
This yields a distribution-free pivot and a Beta law for realized
(calibration-conditional) coverage across calibration draws. This pivot is the
input to SSBC (Appendix~\ref{app:ssbc}). The derivation presented here is a 
one-shot rank/order-statistic pivot, complementary to the Beta–Binomial/de 
Finetti route in Marques Filho’s analysis of the exchangeable sequence of 
future coverage indicators \citep{Marques2025Beta}.

\subsection{Setup}

Let $\mathcal{D}_{\mathrm{cal}}=\{(X_i,Y_i)\}_{i=1}^n$ be exchangeable with a
future test pair $(X_{n+1},Y_{n+1})$. Let $s(x,y)$ be a nonconformity score and
define
\[
S_i := s(X_i,Y_i),\quad i=1,\dots,n,
\qquad\text{and}\qquad
S_{n+1}:=s(X_{n+1},Y_{n+1}).
\]
Fix an index $k\in\{1,\dots,n\}$ and set
\[
\tau := S_{(k)},
\qquad
u:=n+1-k,
\qquad
\alpha_{\mathrm{grid}}=\frac{u}{n+1}.
\]
For the predictor calibrated on $\mathcal D_{\mathrm{cal}}$, the
\emph{calibration-conditional coverage probability} is
\[
p_{\mathrm{cov}}(\mathcal D_{\mathrm{cal}})
:=
\mathbb P\!\left(S_{n+1}\le \tau \mid \mathcal D_{\mathrm{cal}}\right).
\]
This random variable varies across calibration draws but is fixed once
calibration is completed.

\subsection{Rank pivot}

Assume for clarity that the score distribution is continuous so ties occur with
probability zero (ties are addressed in Remark~\ref{rem:ties}). Under
exchangeability of $\{S_1,\dots,S_n,S_{n+1}\}$, the rank
\[
R:=\operatorname{rank}\big(S_{n+1}\ \text{among}\ S_1,\dots,S_n,S_{n+1}\big)
\]
is uniform on $\{1,\dots,n+1\}$ \citep{DavidNagaraja2003}. Since $\tau=S_{(k)}$,
\[
\{S_{n+1}\le \tau\}
\quad\Longleftrightarrow\quad
\{R\le k\}.
\]
Thus coverage is the conditional probability of a rank event.

\subsection{Beta law}

A standard order-statistic identity implies that the conditional probability of
the rank event equals the $k$th order statistic of $n+1$ i.i.d.\ uniforms:
\[
p_{\mathrm{cov}}(\mathcal D_{\mathrm{cal}})
\;\stackrel{d}{=}\;
U_{(k)},\qquad U_{(k)}\sim \mathrm{Beta}(k,u),\quad u=n+1-k,
\]
where $U_{(k)}$ is the $k$th order statistic of $U_1,\dots,U_{n+1}\overset{\mathrm{i.i.d.}}{\sim}\mathrm{Unif}(0,1)$
\citep{DavidNagaraja2003}. Equivalently, for any $t\in[0,1]$,
\[
\mathbb P\!\left(p_{\mathrm{cov}}(\mathcal D_{\mathrm{cal}})\le t\right)
=
I_t(k,u),
\]
where $I_t(\cdot,\cdot)$ is the regularized incomplete Beta function. The law is
distribution-free and depends only on $(n,k)$ (equivalently $(n,u)$).

\begin{remark}[Discrete scores and ties]
\label{rem:ties}
If scores have atoms, ranks are not almost surely unique. Exact pivots can be
recovered by randomized tie-breaking; deterministic left/right-continuous
conventions yield conservative bounds. The non-interpolated order-statistic
thresholding convention used in the main text preserves finite-sample validity.
\end{remark}

Coverage depends only on the rank of the future true-label score relative to the
calibration scores, hence admits a finite-sample distribution-free law.
Most other operational KPIs depend on the joint geometry of multiple label
scores and threshold interactions and therefore do not admit an analogous pivot
(Appendix~\ref{app:single_sample_structural_coupling}).

\section{Small-Sample Beta Correction (SSBC)}
\label{app:ssbc}

\noindent
SSBC is a deterministic index-selection rule for split conformal calibration that
makes a user request $(\alpha^\star,\delta)$ operationally precise for the
\emph{single deployed predictor} obtained after one calibration draw. SSBC
selects a conformal grid index (equivalently an order statistic) so that, with
probability at least $1-\delta$ over calibration randomness, the realized
coverage of the deployed rule is at least $1-\alpha^\star$. In the finite-window
variant, the same guarantee is imposed on empirical coverage over a future
window of size $m$. The construction relies only on the finite-sample distribution
of calibration-conditional coverage from Appendix~\ref{app:coverage_pivot}.

\subsection{Setup and conformal grid}

Let $n_{\mathrm{cal}}$ be the calibration size and let
$S_i:=s(X_i,Y_i)$ denote true-label nonconformity scores for $i=1,\dots,n_{\mathrm{cal}}$.
Split conformal selects a threshold as an order statistic:
\[
\tau = S_{(k)},\qquad k\in\{1,\dots,n_{\mathrm{cal}}\}.
\]
It is convenient to re-index the same grid by the miscoverage index
\[
u := n_{\mathrm{cal}}+1-k \in \{1,\dots,n_{\mathrm{cal}}\},
\qquad
\alpha_{\mathrm{grid}}=\frac{u}{n_{\mathrm{cal}}+1},
\qquad
k = n_{\mathrm{cal}}+1-u.
\]
Here $\alpha^\star\in(0,1)$ is the requested miscoverage and $\delta\in(0,1)$ is
a confidence/risk level controlling the probability (over calibration draws) that
the requested semantics fail.

\subsection{Distribution of realized coverage}

For the fixed predictor calibrated on $\mathcal D_{\mathrm{cal}}$, define the
calibration-conditional coverage probability
\[
p_{\mathrm{cov}}(\mathcal D_{\mathrm{cal}})
:=
\mathbb P(S_{n+1}\le \tau \mid \mathcal D_{\mathrm{cal}}),
\]
where $(X_{n+1},Y_{n+1})$ is exchangeable with the calibration sample. Under
exchangeability, Appendix~\ref{app:coverage_pivot} shows that
\[
p_{\mathrm{cov}}(\mathcal D_{\mathrm{cal}})
\sim
\mathrm{Beta}(k,u),
\qquad k=n_{\mathrm{cal}}+1-u,
\]
exactly and distribution-free. This describes how the realized coverage of
the deployed predictor varies across hypothetical recalibrations, while treating
the deployed rule as fixed after the one calibration step.

\subsection{SSBC objective: calibration-conditional PAC semantics}
\label{app:ssbc_objective}

SSBC enforces the calibration-conditional PAC-style constraint
\[
\mathbb P_{\mathcal D_{\mathrm{cal}}}\!\left(
p_{\mathrm{cov}}(\mathcal D_{\mathrm{cal}})\ \ge\ 1-\alpha^\star
\right)\ \ge\ 1-\delta.
\]
Using the Beta law, this is equivalent to
\[
\mathbb P\!\left(Z \ge 1-\alpha^\star\right)\ \ge\ 1-\delta,
\qquad
Z\sim\mathrm{Beta}(k,u).
\]

\paragraph{Selection rule: least conservative admissible grid point.}
Among discrete grid indices $u\in\{1,\dots,n_{\mathrm{cal}}\}$, SSBC selects the
\emph{largest admissible} $u$ satisfying the tail constraint above:
\[
u^\star
:=\max\Big\{u\in\{1,\dots,n_{\mathrm{cal}}\}:\ 
\mathbb P(Z\ge 1-\alpha^\star)\ge 1-\delta,\ 
Z\sim\mathrm{Beta}(n_{\mathrm{cal}}+1-u,\ u)\Big\}.
\]
Equivalently, SSBC deploys the least conservative grid miscoverage level
$\alpha_{\mathrm{adj}}=u^\star/(n_{\mathrm{cal}}+1)$ that certifies the requested
semantics. The returned order-statistic index is
$k_{\mathrm{adj}}=n_{\mathrm{cal}}+1-u^\star$.

\subsection{Finite-window deployment semantics}
\label{app:ssbc_finite_window}

In many deployments, coverage is evaluated over a finite window of size $m$.
Define empirical coverage over that window by
\[
\widehat C_m := \frac{1}{m}\sum_{j=1}^m I_{\mathrm{cov},j},
\qquad
I_{\mathrm{cov},j}:=\mathbf 1\{Y'_j \in C(X'_j)\},
\qquad
S_m := m\widehat C_m.
\]
Conditional on the calibration-conditional coverage probability
$p_{\mathrm{cov}}(\mathcal D_{\mathrm{cal}})=p$,
\[
S_m\mid p \sim \mathrm{Binomial}(m,p).
\]
Marginalizing $p\sim\mathrm{Beta}(k,u)$ yields the mixture distribution
\[
S_m \sim \mathrm{Beta\text{-}Binomial}(m; k,u),
\qquad k=n_{\mathrm{cal}}+1-u.
\]
The finite-window SSBC criterion selects $u$ such that
\[
\mathbb P\!\left(\widehat C_m \ge 1-\alpha^\star\right)\ \ge\ 1-\delta,
\]
where the probability is over both calibration randomness and the future window.

\subsubsection{Strict lower-tail convention}

Because $\widehat C_m$ is discrete, we adopt a strict violation convention
$\widehat C_m < 1-\alpha^\star$. Define the corresponding count threshold
\[
x^\star := \left\lfloor (1-\alpha^\star)m \right\rfloor + 1,
\qquad\text{so that}\qquad
\{\widehat C_m \ge 1-\alpha^\star\} \iff \{S_m \ge x^\star\}.
\]
All Beta--Binomial tail probabilities in SSBC are evaluated for the event
$\{S_m\ge x^\star\}$. This avoids boundary ambiguity when $(1-\alpha^\star)m$ is
an integer and preserves monotonicity in $u$.

\subsection{Infinite-window limit}

This limiting regime is identified by \citet{Marques2025Beta}. In our
notation, if $u_m$ denotes the SSBC-selected grid index obtained by inverting
the Beta--Binomial tail for window size $m$, and $u_\infty$ denotes the index
obtained from the Beta limit law, then $u_m \to u_\infty$ as $m\to\infty$.
Intuitively, conditional on $p_{\mathrm{cov}}=p$, the empirical coverage
$\widehat C_m$ converges almost surely to $p$, so the finite-window
Beta--Binomial criterion approaches the infinite-window Beta criterion.

\subsection{Feasibility and saturation}
\label{app:ssbc_feasibility}

Not every pair $(\alpha^\star,\delta)$ is feasible at fixed $n_{\mathrm{cal}}$
because calibration choices lie on the conformal grid. Under the most
conservative grid point $u=1$ (equivalently $k=n_{\mathrm{cal}}$),
\[
p_{\mathrm{cov}} \sim \mathrm{Beta}(n_{\mathrm{cal}},1),
\qquad
\mathbb P\!\left(p_{\mathrm{cov}} \ge 1-\alpha\right)
=
1-(1-\alpha)^{n_{\mathrm{cal}}}.
\]
Thus any infinite-window PAC requirement at confidence $1-\delta$ must satisfy
\[
\alpha \ \ge\ 1-\delta^{1/n_{\mathrm{cal}}}.
\]
If this condition is violated (and likewise in the finite-window analogue), no
grid point can satisfy the tail constraint and SSBC returns \textsc{Infeasible}.

\subsection{Coverage semantics under finite calibration}

To visualize how nominal requests map to deployed semantics under finite
calibration, Figure~\ref{fig:ssbc_semantic_map} plots the effective calibration
level $\alpha_{\mathrm{adj}}$ selected by SSBC as a function of the user inputs
$(\alpha,\delta)$ at fixed $n_{\mathrm{cal}}$. Distinct nominal requests can
induce the same deployed grid index and therefore the same realized coverage
semantics.

\begin{figure}[t]
    \centering
    \includegraphics[width=0.75\linewidth]{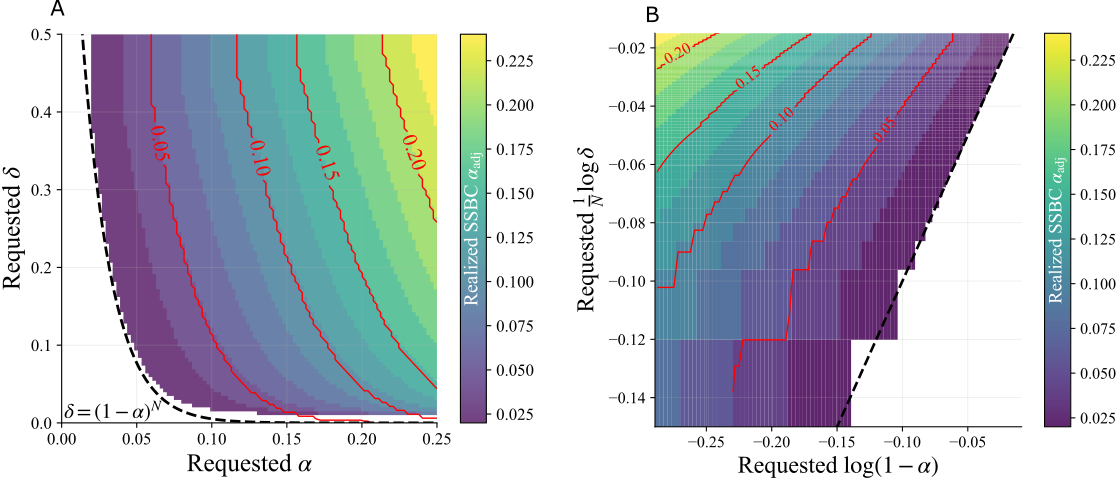}
    \caption{\textbf{Semantic interpretation of nominal coverage requests under finite calibration.}%
    Each panel visualizes the effective calibration level
    $\alpha_{\mathrm{adj}}$ selected by SSBC as a function of the user-specified
    miscoverage level $\alpha$ and confidence level $\delta$, for fixed
    $n_{\mathrm{cal}}$.
    Color encodes the deployed level $\alpha_{\mathrm{adj}}$, while contours
    indicate iso-semantic sets: distinct nominal requests that induce the same
    deployed calibration grid point.
    The feasibility boundary reflects the finite-sample constraint
    $\alpha \gtrsim 1-\delta^{1/n_{\mathrm{cal}}}$.}
    \label{fig:ssbc_semantic_map}
\end{figure}

\subsection{SSBC algorithm (deterministic specification)}
\label{app:ssbc_algorithm}

This subsection provides a reproducible implementation-level specification. The
algorithm returns the largest admissible $u$ (least conservative grid point)
satisfying the relevant tail constraint.

\begin{algorithm}[ht]
\caption{Small-Sample Beta Correction (SSBC)}
\label{alg:ssbc_detailed}
\begin{algorithmic}[1]
\Require
Target miscoverage $\alpha^\star\in(0,1)$;
calibration size $n_{\mathrm{cal}}\in\mathbb N$;
confidence $\delta\in(0,1)$;
deployment regime $\in\{\infty,m\}$ (window size $m$ if finite)
\Ensure
Adjusted grid level $\alpha_{\mathrm{adj}}$ and index $k_{\mathrm{adj}}$,
or \textsc{Infeasible}
\State $t \gets 1-\alpha^\star$
\State $u^\star \gets -1$
\If{regime $=m$}
    \State $x^\star \gets \lfloor t\,m\rfloor + 1$
\EndIf
\For{$u=1,\dots,n_{\mathrm{cal}}$}
    \State $a \gets n_{\mathrm{cal}}+1-u$, \quad $b \gets u$
    \If{regime $=\infty$}
        \State $p_{\mathrm{tail}} \gets \Pr[Z\ge t],\ Z\sim\mathrm{Beta}(a,b)$
    \Else
        \State $p_{\mathrm{tail}} \gets \Pr[X\ge x^\star],\
        X\sim\mathrm{Beta\text{-}Binomial}(m;a,b)$
    \EndIf
    \If{$p_{\mathrm{tail}}\ge 1-\delta$}
        \State $u^\star \gets u$
    \EndIf
\EndFor
\If{$u^\star<0$}
    \State \Return \textsc{Infeasible}
\EndIf
\State $\alpha_{\mathrm{adj}} \gets \tfrac{u^\star}{n_{\mathrm{cal}}+1}$
\State $k_{\mathrm{adj}} \gets n_{\mathrm{cal}}+1-u^\star$
\State \Return $\alpha_{\mathrm{adj}},\ k_{\mathrm{adj}}$
\end{algorithmic}
\end{algorithm}
\clearpage
\paragraph{Relation to DKWM-style calibration.}
DKWM-style calibration modifies the nominal grid choice to enforce conservative,
worst-case guarantees uniformly over calibration draws and distributions. SSBC
addresses a different question: it assigns a calibration-conditional PAC meaning
to a user request $(\alpha^\star,\delta)$ for the \emph{single deployed} predictor
produced after one calibration. DKWM targets uniform validity across hypothetical
recalibrations; SSBC targets admissibility of the realized rule via Beta
(or Beta--Binomial) tails.

\subsection{Extended simulation table for calibration-conditional violations}

Table~\ref{tab:calibration_conditional_violations} reports the full
calibration-size grid for the simulation summarized in
Section~\ref{sec:applications:ssbc}.

\begin{table}[htbp]
    \centering
    \caption{\textbf{Calibration-conditional coverage violation rates with theory.}
    Target miscoverage $\alpha^\star=0.10$, confidence $\delta=0.10$.
    $\alpha_{\mathrm{grid}}$ is the grid point selected on the conformal grid,
    $\alpha_{\mathrm{cont}}$ is the requested value under the DKWM correction, and
    $m_{\mathrm{cal}}$ is the calibration-window size. Results are based on
    $10^6$ calibration draws and a finite deployment window of size
    $m_{\mathrm{infer}}=100$. The \textbf{Obs} column reports
    $\mathbb P(\widehat C_m < 1-\alpha^\star)$. \textbf{Beta} reports
    $\mathbb P(p_{\mathrm{cov}} < 1-\alpha^\star)$ under
    $p_{\mathrm{cov}}\sim\mathrm{Beta}(k,u)$. \textbf{BetaBinom} reports
    $\mathbb P(\widehat C_m < 1-\alpha^\star)$ under the induced
    Beta--Binomial law.}
    \label{tab:calibration_conditional_violations}
    \small
    \begin{tabular}{r l r r r r r r}
    \hline
    $n_{\mathrm{cal}}$ & Method & $u$ & $\alpha_{\mathrm{grid}}$
    & Obs & Beta & BetaBinom & $\alpha_{\mathrm{cont}}$ \\
    \hline
     50 & None &  5 & 0.0980 & 0.3963 & 0.4312 & 0.3964 & -- \\
     50 & SSBC &  2 & 0.0392 & 0.0476 & 0.0338 & 0.0472 & -- \\
     50 & DKWM &  1 & 0.0196 & 0.0096 & 0.0052 & 0.0095 & $-0.0731$ \\
    \hline
     75 & None &  7 & 0.0921 & 0.3454 & 0.3673 & 0.3464 & -- \\
     75 & SSBC &  4 & 0.0526 & 0.0769 & 0.0504 & 0.0768 & -- \\
     75 & DKWM &  1 & 0.0132 & 0.0016 & 0.0004 & 0.0017 & $-0.0413$ \\
    \hline
    100 & None & 10 & 0.0990 & 0.4075 & 0.4513 & 0.4071 & -- \\
    100 & SSBC &  6 & 0.0594 & 0.0960 & 0.0576 & 0.0956 & -- \\
    100 & DKWM &  1 & 0.0099 & 0.0004 & 0.0000 & 0.0004 & $-0.0224$ \\
    \hline
    150 & None & 15 & 0.0993 & 0.4119 & 0.4602 & 0.4107 & -- \\
    150 & SSBC &  9 & 0.0596 & 0.0804 & 0.0307 & 0.0801 & -- \\
    150 & DKWM &  1 & 0.0066 & 0.0000 & 0.0000 & 0.0000 & $+0.0001$ \\
    \hline
    200 & None & 20 & 0.0995 & 0.4130 & 0.4655 & 0.4124 & -- \\
    200 & SSBC & 13 & 0.0647 & 0.0974 & 0.0320 & 0.0980 & -- \\
    200 & DKWM &  2 & 0.0100 & 0.0000 & 0.0000 & 0.0000 & $+0.0135$ \\
    \hline
    250 & None & 25 & 0.0996 & 0.4126 & 0.4692 & 0.4134 & -- \\
    250 & SSBC & 16 & 0.0637 & 0.0863 & 0.0175 & 0.0858 & -- \\
    250 & DKWM &  5 & 0.0199 & 0.0003 & 0.0000 & 0.0003 & $+0.0226$ \\
    \hline
    300 & None & 30 & 0.0997 & 0.4139 & 0.4719 & 0.4141 & -- \\
    300 & SSBC & 20 & 0.0664 & 0.0969 & 0.0171 & 0.0971 & -- \\
    300 & DKWM &  8 & 0.0266 & 0.0009 & 0.0000 & 0.0009 & $+0.0293$ \\
    \hline
    500 & None & 50 & 0.0998 & 0.4153 & 0.4782 & 0.4153 & -- \\
    500 & SSBC & 34 & 0.0679 & 0.0949 & 0.0049 & 0.0955 & -- \\
    500 & DKWM & 22 & 0.0439 & 0.0097 & 0.0000 & 0.0096 & $+0.0453$ \\
    \hline
    \end{tabular}
\end{table}

      \clearpage


\section{Single-sample structural coupling, leave-one-out decoupling, and planning-envelope inflation}
\label{app:single_sample_structural_coupling}

\noindent
The two-stage predictive reference used throughout this paper separates
\emph{calibration} (choose a conformal threshold on $\mathcal D_{\mathrm{cal}}$)
from \emph{operational evaluation} (estimate rates on an independent window).
This separation is what makes window indicators behave as i.i.d.\ Bernoulli draws
under a fixed deployed rule.

In practice, an independent audit set is not always available. When the same
sample is reused both to select the threshold and to estimate operational rates,
threshold selection and evaluation become coupled. This appendix records
(i) a minimal structural reason for reuse-induced dependence,
(ii) a data-efficient remedy, leave-one-out (LOO) recalibration, that provides
effective \emph{practical decoupling} in our regimes, and
(iii) an \emph{inflation} parameter \texttt{infl} that pessimizes predictive
envelopes when residual dependence or regime instability remains. Throughout,
these LOO constructions are intended for exploratory planning when a dedicated
audit split is unavailable, we cannot provide strong theoretical guarantees on the
rates of the LOO proxy indicators at this point.

\subsection{Why single-sample reuse introduces dependence}
\label{app:single_sample_structural_coupling:reuse}

Let $\mathcal D_{\mathrm{cal}}=\{(X_i,Y_i)\}_{i=1}^n$ be an exchangeable
calibration sample, let $S_i:=s(X_i,Y_i)$ be true-label nonconformity scores, and
let the split conformal threshold be the $k$th order statistic
\[
\hat\tau := S_{(k)},
\qquad k\in\{1,\dots,n\},
\]
assuming continuity so ties occur with probability zero. For any threshold $t$,
define the crossing indicator $I_i(t):=\mathbf 1\{S_i\le t\}$.

\paragraph{Insight from counting.}
If $\hat\tau=t$ is the $k$th order statistic, then exactly $k-1$ scores lie
strictly below $t$ and one score equals $t$ (under continuity, almost surely),
so
\[
\sum_{i=1}^n I_i(t) = k
\quad \text{almost surely}.
\]
This already shows that the indicators $\{I_i(t)\}_{i=1}^n$ cannot be
conditionally independent given $\hat\tau=t$: once some are known to be one,
fewer ones remain available for the others. By exchangeability, for any
$i\neq j$,
\[
\mathrm{Cov}\!\left(I_i(t),I_j(t)\mid \hat\tau=t\right)
:=\frac{k(k-n)}{n^2(n-1)} < 0
\quad (k<n).
\]

\paragraph{Implication for operational envelopes.}
Operational indicators are functions of the deployed prediction set
$C_{\hat\tau}(X)=\{y:s(X,y)\le \hat\tau\}$ and therefore inherit reuse-induced
dependence. In particular, the nonzero conditional covariance shows that these
indicators are not Bernoulli draws under a fixed rule once threshold selection
and evaluation are performed on the same sample. To recover the Bernoulli
sampling picture needed for the two-stage guarantees in the main text, one
therefore needs an independent audit set after calibration. If one na\"ively
treats reuse-based indicators as i.i.d.\ Bernoulli under a fixed rule,
predictive envelopes can become under-dispersed. 

\subsection{Approximate decoupling via leave-one-out (LOO)}
\label{app:single_sample_structural_coupling:loo}

When an independent audit sample is unavailable, we use LOO recalibration
\citep{vovk2015crossconformal,barber2021predictive} to reduce self-influence.
Let $\hat\tau_{-i}$ be the split conformal threshold computed on
$\mathcal D_{\mathrm{cal}}\setminus\{(X_i,Y_i)\}$, and let $C_{-i}(\cdot)$ be the
corresponding prediction set map. For an operational event functional
$g_j(C(X),Y)\in\{0,1\}$ (e.g., singleton, doublet, wrong-singleton), define LOO
indicators
\[
Z_{i,j} := g_j\!\left(C_{-i}(X_i),Y_i\right),\qquad i=1,\dots,n,
\]
and pooled LOO summaries
\[
k_{\mathrm{pool},j} := \sum_{i=1}^n Z_{i,j},
\qquad
\widehat r^{\mathrm{LOO}}_j := \frac{k_{\mathrm{pool},j}}{n}.
\]

\paragraph{Proxy two-stage interpretation.}
Each $Z_{i,j}$ is evaluated under a rule that does not use point $i$, restoring a
localized separation between rule construction and evaluation. Although the rule
varies across folds, each fold differs only slightly from the full calibrated
rule, so $\{Z_{i,j}\}$ can be read as indicators from nearby operating regimes.
Pooling provides a direct empirical proxy for operational behavior under finite
calibration.

\paragraph{Empirical decoupling.}
In the regimes studied (Section~\ref{sec:applications:simulations_envelopes}),
LOO envelopes track the two-stage Calibrate--and--Audit reference closely in
center and often slightly pessimistically in width. We therefore use pooled LOO
indicators for planning when a separate audit set is unavailable.

\subsection{Envelope inflation as controlled pessimization}
\label{app:single_sample_structural_coupling:infl}
LOO is local and does not remove all dependence, especially near regime
boundaries where small threshold shifts can change region support. We therefore
introduce an explicit inflation parameter $\texttt{infl}\ge 1$ that widens
predictive envelopes by shrinking an effective sample size used in the predictive
model.

\paragraph{LOO-intrinsic pessimization compounded by SSBC.}
LOO recalibration introduces an intrinsic pessimization because each fold
threshold $\hat\tau_{-i}$ is computed on only $n-1$ calibration points.
On top of this, we apply the small-sample Beta correction (SSBC) to achieve the
desired nominal level $\alpha$ under finite calibration uncertainty. The SSBC
step further increases conservatism (i.e., widens envelopes) relative to a
plug-in rate estimate, so the combined LOO$+$SSBC construction is typically
wider than either component alone. Importantly, SSBC is applied using the
\emph{actual} fold sample size (e.g., $n-1$) to correct the nominal tail level,
whereas $\texttt{infl}$ below is a separate width knob acting only through the
downstream dispersion model (we do not apply SSBC to $n_{\mathrm{eff}}$).

\paragraph{Rank stability of LOO thresholds ($\pm 1$ order statistic).}
In split conformal, the threshold is an order statistic of the calibration
scores. Let $S_1,\dots,S_n$ be the full-sample calibration scores with order
statistics $S_{(1)}\le \cdots \le S_{(n)}$, and let
$k := \lceil (n+1)(1-\alpha)\rceil$ so that $\hat\tau = S_{(k)}$.
Under LOO, the fold uses $n-1$ scores and index
$k' := \lceil n(1-\alpha)\rceil \in \{k-1,k\}$.
Removing a single score can shift the rank position of the selected order
statistic by at most one, hence (ignoring ties) the LOO threshold satisfies
\[
\hat\tau_{-i} \in \{S_{(k-1)},\, S_{(k)},\, S_{(k+1)}\},
\]
and more generally lies between adjacent full-sample order statistics. This
formalizes the ``local'' nature of LOO recalibration, while still allowing
support changes near regime boundaries when $C(\cdot)$ is sensitive to small
threshold movements.

\paragraph{Operational effect.}
In constructions below we replace a nominal proxy sample size $n$ by
$n_{\mathrm{eff}}=n/\texttt{infl}$ (and analogously for any proxy count used to
parameterize predictive dispersion). Larger \texttt{infl} yields wider envelopes
without changing the pooled mean, providing a monotone knob for conservatism.

\paragraph{Diagnostic guidance (variance-based inflation).}
As a diagnostic of calibration-induced variability, compute fold rates
$\widehat r^{(-i)}$ under each LOO-calibrated rule and their empirical variance
\[
\overline r_{\mathrm{LOO}}
=
\frac{1}{n}\sum_{i=1}^n \widehat r^{(-i)},
\qquad
\widehat{\mathrm{Var}}_{\mathrm{LOO}}
=
\frac{1}{n-1}\sum_{i=1}^n
\left(\widehat r^{(-i)} - \overline r_{\mathrm{LOO}}\right)^2.
\]
To isolate variability attributable to \emph{recalibration} (rule changes) from
simple delete-one evaluation noise, we also form a \emph{not-LOO} reference in
which the operating rule is held fixed. Let
\[
W_i := g_j\!\left(C(X_i),Y_i\right),
\qquad
\widehat r^{\mathrm{full}} := \frac{1}{n}\sum_{i=1}^n W_i,
\]
and define delete-one evaluation rates under the fixed full rule $C(\cdot)$ by
\[
\widehat r^{\mathrm{full}(-i)}
:=
\frac{1}{n-1}\sum_{\ell\neq i} W_\ell
=
\frac{n\,\widehat r^{\mathrm{full}} - W_i}{n-1},
\qquad i=1,\dots,n.
\]
Let $\widehat{\mathrm{Var}}_{\mathrm{full}}$ denote the empirical variance of
$\{\widehat r^{\mathrm{full}(-i)}\}_{i=1}^n$. We then define a variance ratio
\[
q := \frac{\widehat{\mathrm{Var}}_{\mathrm{LOO}}}{\widehat{\mathrm{Var}}_{\mathrm{full}}}
\]
and use it to suggest an inflation level
\[
\texttt{infl} := \max\{1,q\},
\]
Intuitively, $\widehat{\mathrm{Var}}_{\mathrm{full}}$ measures baseline delete-one
variability when the rule is fixed, while $\widehat{\mathrm{Var}}_{\mathrm{LOO}}$
captures additional spread induced by recalibration. Although $\hat\tau_{-i}$ can
move by at most one rank, the resulting change in region support (and therefore
in $g_j$) can be discontinuous near regime boundaries; larger $q$ flags
such sensitivity and points toward higher \texttt{infl}.

In our experiments the variance ratio is typically only slightly above $1$,
indicating that LOO recalibration contributes a modest amount of additional
variability beyond delete-one evaluation under a fixed rule. This is consistent
with the rank-$1$ stability of LOO thresholds (up to ties).

\subsection{Predictive envelope constructions from LOO proxies}
\label{app:single_sample_structural_coupling:envelopes}

We describe two complementary envelope constructions built from LOO proxy
indicators. The first mirrors the two-stage reference and is the primary
approximation; the second is a conservative guardrail.

\paragraph{Beta--Binomial planning envelopes.}
Let $Z_i$ denote a pooled proxy indicator for a fixed KPI (suppress $j$), and let
$k_{\mathrm{pool}}=\sum_{i=1}^n Z_i$ with pooled proxy rate
$\widehat p=k_{\mathrm{pool}}/n$. Apply inflation via
\[
n_{\mathrm{eff}}=\frac{n}{\texttt{infl}},
\qquad
k_{\mathrm{eff}}=\widehat p\,n_{\mathrm{eff}}=\frac{k_{\mathrm{pool}}}{\texttt{infl}}.
\]
With a small prior offset $\mathrm{offset}\in\{1,1/2\}$, define
\[
\alpha := k_{\mathrm{eff}}+\mathrm{offset},
\qquad
\beta := (n_{\mathrm{eff}}-k_{\mathrm{eff}})+\mathrm{offset}.
\]
For a future deployment window of size $m$, the predictive count $S_m$ is modeled as
\[
S_m \sim \mathrm{BetaBinomial}(m,\alpha,\beta),
\]
and equal-tailed prediction intervals follow from the Beta--Binomial CDF.
Increasing \texttt{infl} shrinks $n_{\mathrm{eff}}$ and widens intervals
monotonically while preserving the pooled mean.

\paragraph{Hoeffding-type dominance bound.}
As a conservative alternative, Hoeffding's inequality gives a distribution-free
bound for a window rate $\widehat r_m$ around a proxy mean
$\widehat r_{\mathrm{LOO}}$:
\[
\Pr\!\left(
\left| \widehat{r}_m - \widehat{r}_{\mathrm{LOO}} \right| \ge \epsilon
\right)
\le 2 \exp(-2 m \epsilon^2),
\]
yielding simple symmetric envelopes that can be used as a worst-case planning
check.

\paragraph{Summary.}
Single-sample reuse induces structural dependence through the thresholding
mechanism described above, primarily distorting predictive \emph{dispersion}.
LOO recalibration reduces self-influence and closely tracks audit-style
behavior in our regimes. Predictive envelopes are then constructed from pooled
LOO proxies using a Beta--Binomial model, with \texttt{infl} providing a
monotone knob for controlled pessimization and the Hoeffding bound serving as a
conservative guardrail. At this point, we cannot provide strong theoretical
guarantees on the rates of the LOO proxy indicators, and therefore recommend
using a proper audit set for certification.


\section{Binary conformal partitions: regimes, coupling, and rate primitives}
\label{app:geometry_binary_partition}

\noindent
This appendix records the geometric facts used in
Section~\ref{sec:setting}, Section~\ref{sec:operational-quantities}, and
Section~\ref{sec:geometric_manifolds}. Here we assume thresholds are fixed
and all probabilities are taken with respect to the deployment distribution conditional on $\mathcal
D_{\mathrm{cal}}$.

\subsection{Class-conditional split conformal as a four-region partition.}
\label{app:geometry_binary_partition:four_regions}

Let $\mathcal Y=\{0,1\}$ and let $s(x,y)$ be a nonconformity score. Class-conditional split conformal produces thresholds $\tau_0,\tau_1$, and the set-valued output is
\begin{equation}
\mathcal C(x)
=
\{0: s(x,0)\le \tau_0\}
\cup
\{1: s(x,1)\le \tau_1\},
\label{eq:mondrian_set_app_geom}
\end{equation}
equivalently represented by the region label
\[
R_\tau(x)
:=
\big(\mathbf 1\{s(x,0)\le\tau_0\},\ \mathbf 1\{s(x,1)\le\tau_1\}\big)
\in\{0,1\}^2,
\qquad \tau=(\tau_0,\tau_1).
\]
Writing $(s_0,s_1)=(s(x,0),s(x,1))$, the thresholds partition score space into
\[
R_\tau(x)=
\begin{cases}
11, & s_0\le\tau_0,\ s_1\le\tau_1 \quad \text{(doublet)},\\
10, & s_0\le\tau_0,\ s_1>\tau_1 \quad \text{(singleton $\{0\}$)},\\
01, & s_0>\tau_0,\ s_1\le\tau_1 \quad \text{(singleton $\{1\}$)},\\
00, & s_0>\tau_0,\ s_1>\tau_1 \quad \text{(abstention)}.
\end{cases}
\]
For any fixed $\tau$,
\[
\sum_{r\in\{00,01,10,11\}} \mu_r(\tau)=1,
\qquad
\mu_r(\tau):=\Pr(R_\tau(X)=r\mid\mathcal D_{\mathrm{cal}}).
\]

\subsection{Probability-normalized scores and a sharp regime boundary}
\label{app:geometry_binary_partition:feasibility_boundary}

Many probabilistic classifiers induce probability-normalized scores, e.g.
$s(x,y)=1-P(y\mid x)$. For $\mathcal Y=\{0,1\}$ this implies
\[
s(x,0)+s(x,1)=1,
\]
so feasible score pairs lie on the diagonal manifold
\[
\mathcal M=\{(u,1-u): u\in[0,1]\}.
\]
Intersecting $\mathcal M$ with the threshold rectangles yields a sharp boundary
that determines which region types can occur with nonzero mass.

\begin{proposition}[Regime boundary under probability normalization]
\label{prop:feasibility_boundary}
Assume $(s_0,s_1)\in\mathcal M$ almost surely. Then:
\begin{enumerate}
\item $R_{11}$ has nonempty intersection with $\mathcal M$ iff $\tau_0+\tau_1\ge 1$,
and has positive-length intersection iff $\tau_0+\tau_1>1$.
\item $R_{00}$ has nonempty intersection with $\mathcal M$ iff $\tau_0+\tau_1\le 1$,
and has positive-length intersection iff $\tau_0+\tau_1<1$.
\item On the boundary $\tau_0+\tau_1=1$, both $R_{11}$ and $R_{00}$ intersect
$\mathcal M$ at a single point; hence under any continuous distribution on
$\mathcal M$ they have probability zero and only the singleton regions carry
mass.
\end{enumerate}
\end{proposition}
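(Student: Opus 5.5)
The plan is to parametrize the diagonal manifold $\mathcal M$ by $u=s_0\in[0,1]$, so that $s_1=1-u$. Each region then becomes an interval (possibly empty) of $u$-values, and every claim reduces to comparing interval endpoints. I would assume the thresholds lie in $[0,1]$, since scores of the form $1-P(y\mid x)$ lie there. If $\tau_y$ falls outside $[0,1]$, one first clips it to $[0,1]$; this changes none of the regions on $\mathcal M$, and the statement should be read with the clipped values (or the degenerate cases treated separately).

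\textbf{Part 1 ($R_{11}$).} A point $(u,1-u)$ lies in $R_{11}$ iff $u\le\tau_0$ and $1-u\le\tau_1$, that is, iff $u\in[1-\tau_1,\tau_0]\cap[0,1]$. With $\tau_0,\tau_1\in[0,1]$ this interval is $[1-\tau_1,\tau_0]$, which is nonempty iff $1-\tau_1\le\tau_0$, i.e.\ iff $\tau_0+\tau_1\ge1$. Its length is $\tau_0+\tau_1-1$, which is positive iff $\tau_0+\tau_1>1$.

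\textbf{Part 2 ($R_{00}$).} Here the conditions are $u>\tau_0$ and $1-u>\tau_1$, so $u\in(\tau_0,1-\tau_1)$, an open interval. Taken literally, this is nonempty iff $\tau_0+\tau_1<1$, and its length is $1-\tau_0-\tau_1$. To get the stated ``iff $\le1$'' for nonempty intersection, I would interpret the intersection in the closure sense, i.e.\ the closure of $R_{00}$, giving $[\tau_0,1-\tau_1]$. Equivalently, one notes that at equality the open interval degenerates to the single boundary point $u=\tau_0$. I expect this strict-versus-weak inequality bookkeeping to be the only real obstacle. My approach is to state explicitly that for $R_{00}$ the ``nonempty'' claim refers to the closed region (or the boundary point), while the positive-length claim is insensitive to this choice, since endpoints have measure zero.

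\textbf{Part 3 (boundary).} Set $\tau_0+\tau_1=1$. Both intervals collapse to the single point $u=\tau_0=1-\tau_1$. A continuous (atomless) distribution on $\mathcal M$ assigns this point probability zero, so $\mu_{11}=\mu_{00}=0$. Since $\sum_r\mu_r(\tau)=1$ (from the four-region partition above), all mass lies in $R_{10}\cup R_{01}$, whose $u$-intervals are $[0,\tau_0)$ and $(\tau_0,1]$. Hence only the singleton regions carry mass.
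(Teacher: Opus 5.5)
Your proof is correct and follows the paper's argument: parametrize $\mathcal M$ by $u=s_0$, read off $R_{11}\cap\mathcal M=[1-\tau_1,\tau_0]$ and $R_{00}\cap\mathcal M=(\tau_0,1-\tau_1)$, and compare endpoints. Your handling of the boundary case is more careful than the paper's, which says the open interval ``degenerates to a point'' at $\tau_0+\tau_1=1$ when it is actually empty; so part~2's ``nonempty iff $\le 1$'' and part~3's single-point claim for $R_{00}$ do need your closure reading, while the probability-zero conclusion of part~3 holds either way.

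There is one slip in your side remark on clipping. It is exact only for thresholds above $1$. Raising a threshold $\tau_y<0$ to $0$ adds or removes the endpoint $u=0$ or $u=1$ from the regions, and those single points are exactly what the nonempty and single-point claims are sensitive to. It is cleaner to keep the assumption $\tau_y\in[0,1]$, which holds automatically because each $\tau_y$ is an order statistic of calibration scores lying in $[0,1]$.
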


\begin{proof}
Parameterize $\mathcal M$ by $u=s_0\in[0,1]$, so $s_1=1-u$.

$R_{11}$ requires $u\le\tau_0$ and $1-u\le\tau_1$, i.e.
$u\in[\,1-\tau_1,\ \tau_0\,]$. This interval has positive length iff
$1-\tau_1<\tau_0$, i.e.\ $\tau_0+\tau_1>1$, and degenerates to a point at equality.

$R_{00}$ requires $u>\tau_0$ and $1-u>\tau_1$, i.e.
$u\in(\,\tau_0,\ 1-\tau_1\,)$. This interval has positive length iff
$\tau_0<1-\tau_1$, i.e.\ $\tau_0+\tau_1<1$, and degenerates to a point at equality.
\end{proof}

\noindent
Crossing the affine boundary $\tau_0+\tau_1=1$ therefore removes an entire region
label (doublet or abstention) from the support of $R_\tau(X)$ under probability
normalization, explaining sharp regime changes in attainable operating behavior.

\subsection{Cross-threshold dominance in the hedging regime}
\label{app:geometry_binary_partition:cross_threshold}

Within the hedging regime $\tau_0+\tau_1>1$, the manifold $\mathcal M$ is
partitioned into three contiguous intervals corresponding to $\{10,11,01\}$. With
$u=s(x,0)$:
\[
u\in[0,\,1-\tau_1)
\Rightarrow R_\tau(x)=10,\qquad
u\in[\,1-\tau_1,\,\tau_0\,]
\Rightarrow R_\tau(x)=11,\qquad
u\in(\tau_0,\,1]
\Rightarrow R_\tau(x)=01.
\]
Hence each singleton region is controlled primarily by the \emph{opposing}
threshold: increasing $\tau_1$ expands $11$ at the expense of $10$, while
increasing $\tau_0$ expands $11$ at the expense of $01$. Operationally,
$(\tau_0,\tau_1)$ act as \emph{mass-reallocation boundaries}, not independent
per-class knobs.

Note that under probability normalization, setting thresholds to $(1-\tau_1,1-\tau_0)$ 
exchanges the roles of regions $11$ and $00$. If the deployment policy 
treats hedging and abstention identically, then this swap does not change 
action-level behavior, even though marginal coverage changes.

\subsection{Threshold classifiers from split conformal cutpoints}
\label{app:geometry_binary_partition:argmax_rule}

When $\tau_0=\tau_1=1/2$, the induced classification rule is the argmax
rule (with random tie-breaking), and coverage coincides with the
classifier's accuracy.

Using probability-normalized scores $s(x,y)=1-P(Y=y\mid X=x)$, the
threshold condition $s(x,y)\le 1/2$ is equivalent to $P(Y=y\mid X=x)\ge
1/2$. Thus, with $\tau_0=\tau_1=1/2$, the conformal set includes exactly
those labels whose posterior probability is at least $1/2$: it returns
$\{0\}$ when $P(0\mid x)>1/2$, $\{1\}$ when $P(1\mid x)>1/2$, and (only on
ties) $\{0,1\}$ when $P(0\mid x)=P(1\mid x)=1/2$. If we break ties at random
to obtain a single-label predictor, this is precisely the argmax rule.
Moreover, away from the tie set the conformal output is a singleton, so the
event $\{Y\in\mathcal C(X)\}$ is identical to $\{Y=\hat Y(X)\}$; hence the
coverage probability equals the classification accuracy (with the same
tie-breaking convention).

\paragraph{The $\tau_0+\tau_1=1$ family and asymmetric Bayes costs.}
More generally, if $\tau_0+\tau_1=1$ then abstention disappears (up to the
tie set) and the conformal output is almost surely a singleton. Indeed,
$\mathcal C(x)$ includes label $y$ iff $P(Y=y\mid X=x)\ge 1-\tau_y$, so
$\tau_0+\tau_1=1$ implies $1-\tau_0=\tau_1$ and $1-\tau_1=\tau_0$, yielding
the one-parameter threshold rule
\[
\hat Y(x)=\mathbf 1\!\{P(1\mid x)\ge \tau_0\}
        =\mathbf 1\!\{P(0\mid x)\le \tau_1\}.
\]
In this regime, $\Pr(Y\in\mathcal C(X))=\Pr(Y=\hat Y(X))$, i.e., coverage
equals the accuracy of the corresponding threshold classifier.

The parameter $\tau_0$ also has a standard decision-theoretic
interpretation. Consider asymmetric misclassification costs, with
$c_{01}$ the cost of a false positive (predicting $1$ when $Y=0$) and
$c_{10}$ the cost of a false negative (predicting $0$ when $Y=1$). The Bayes
rule that minimizes conditional expected cost predicts $1$ whenever
\[
c_{01}\,P(Y=0\mid x)\;\le\;c_{10}\,P(Y=1\mid x)
\qquad\Longleftrightarrow\qquad
P(1\mid x)\;\ge\;\frac{c_{01}}{c_{01}+c_{10}}.
\]
Thus sweeping $\tau_0$ over $(0,1)$ traces the familiar family of
cost-sensitive Bayes classifiers, with $\tau_0=c_{01}/(c_{01}+c_{10}$
corresponding to the cost ratio $c_{01}:c_{10}$.

\subsection{Region--label primitives and rate factorizations}
\label{app:geometry_binary_partition:primitives}

For auditing and planning, the primitive object is the region--class label table
\[
p_{r,y}(\tau):=\Pr(R_\tau(X)=r,\ Y=y\mid\mathcal D_{\mathrm{cal}}),
\qquad r\in\{00,01,10,11\},\ y\in\{0,1\}.
\]
Two derived summaries are
\begin{align*}
\mu_r(\tau)
&:=\Pr(R_\tau(X)=r\mid\mathcal D_{\mathrm{cal}})
=\sum_{y\in\{0,1\}}p_{r,y}(\tau),\\
\eta_r(\tau)
&:=\Pr(Y=1\mid R_\tau(X)=r,\ \mathcal D_{\mathrm{cal}})
=\frac{p_{r,1}(\tau)}{\mu_r(\tau)}\quad(\mu_r(\tau)>0).
\end{align*}
Any region-associated KPI is a projection of $\{p_{r,y}(\tau)\}$; see
Section~\ref{sec:setting:targets}.

\paragraph{Example: decisive error masses under commit-on-singletons.}
Consider the convention
\[
10\mapsto 0,\qquad 01\mapsto 1,\qquad 11,00\mapsto \text{defer / reject}.
\]
Then the decisive false-negative and false-positive masses are
\[
\mathrm{FN}_{\mathrm{dec}}(\tau)
=
\Pr(R_\tau(X)=10,\ Y=1\mid\mathcal D_{\mathrm{cal}})
=
p_{10,1}(\tau)
=
\mu_{10}(\tau)\,\eta_{10}(\tau),
\]
\[
\mathrm{FP}_{\mathrm{dec}}(\tau)
=
\Pr(R_\tau(X)=01,\ Y=0\mid\mathcal D_{\mathrm{cal}})
=
p_{01,0}(\tau)
=
\mu_{01}(\tau)\,[1-\eta_{01}(\tau)].
\]
The decisive mass is $\mu_{10}(\tau)+\mu_{01}(\tau)$, while the defer mass is
$\mu_{11}(\tau)+\mu_{00}(\tau)$ (with feasibility of $11$ versus $00$ governed by
Proposition~\ref{prop:feasibility_boundary} under probability normalization).


\section{Conformal interface-relative decision optimality and inverse pricing envelopes for a fixed conformal interface}
\label{app:cost-coherence}
\label{app:cost_pricing_envelopes}

\noindent
This appendix formalizes the operational point that distribution-free calibration
is not automatically cost-agnostic once conformal outputs are wired into actions.
We specialize classical statistical decision theory and reject-option analysis
\citep{casella2002statistical,Chow1970,Herbei2006RejectOption,Yuan2010RejectOption,Bartlett2008RejectHinge}
to the conformal interface as the deployed observable. Fix thresholds $\tau$
(calibration-conditional viewpoint) and treat the induced finite region label
$R_\tau(X)\in\mathcal R$ as the deployed observable. Any downstream rule that
uses only this interface can depend on the data only through $r\in\mathcal R$,
so whether the convention is decision-optimal relative to the conformal
interface under a given cost model is evaluated \emph{region-wise} using
within-region label frequencies (we formalize this below as
\emph{conformal interface-relative decision optimality}). The underlying concept
is simple: among actions available after observing only $R_\tau(X)$,
the deployed convention should minimize region-wise conditional expected loss. We
cast this as an \emph{inverse pricing} problem: given a fixed (interface,
convention) pair, characterize the set of consequence prices under which the
convention is decision-optimal relative to the conformal interface.

\noindent
The main point is clear: once a conformal predictor is treated as a decision
interface, decision-optimal downstream action depends on the region-wise label
frequencies, not on coverage alone and not on the set-valued output structure
alone. The contribution of this appendix is not new decision theory; it is to
make that dependence explicit for a fixed conformal interface through a worked
Chow-style case.

\subsection{Interface primitives: masses and within-region label frequencies}
\label{app:cost_pricing_envelopes:interface}

Specialize to $\mathcal Y=\{0,1\}$ and $\mathcal R=\{00,01,10,11\}$ as in
Appendix~\ref{app:geometry_binary_partition}. Adopt the calibration-conditional
joint table
\[
p_{r,y}:=\mathbb{P}(R_\tau(X)=r,\ Y=y\mid\mathcal D_{\mathrm{cal}}).
\]
Define region mass and within-region label frequency
\[
\mu_r := p_{r,0}+p_{r,1},
\qquad
\eta_r := \Pr(Y=1\mid R_\tau(X)=r,\ \mathcal D_{\mathrm{cal}})
= \frac{p_{r,1}}{\mu_r}\quad (\mu_r>0).
\]
Because $\mathcal R$ is finite, conditional uncertainty about $Y$ is piecewise
constant: within region $r$, all decision comparisons reduce to $\eta_r$.

\subsection{Region-associated action conventions and decision optimality}
\label{app:cost_pricing_envelopes:rationality}

Let $\mathcal A$ be a finite action set and let $L_\theta(a,y)$ be a priced
consequence (cost or negative utility), parameterized by $\theta\in\Theta$. A
deployed convention is a region-associated policy
\[
\tilde\pi:\mathcal R\to\mathcal A.
\]
For region $r$ with $\mu_r>0$, the interface-relative conditional risk of action
$a$ is
\[
\mathcal C_\theta(a\mid r)
:=\mathbb E[L_\theta(a,Y)\mid R_\tau(X)=r,\ \mathcal D_{\mathrm{cal}}]
=(1-\eta_r)L_\theta(a,0)+\eta_r L_\theta(a,1).
\]

\paragraph{Conformal interface-relative decision optimality.}
We say $\tilde\pi$ is \emph{decision-optimal relative to the conformal interface}
under pricing $\theta$ if for every region with $\mu_r>0$,
\[
\mathcal C_\theta(\tilde\pi(r)\mid r)
\le \mathcal C_\theta(a\mid r)
\qquad \forall a\in\mathcal A.
\]
This is precisely decision optimality relative to the coarsened observable
$R_\tau(X)$: the action chosen in each region minimizes posterior expected loss
among the admissible actions available at that interface.

\subsection{Inverse pricing envelope}
\label{app:cost_pricing_envelopes:envelope}

For each region with $\mu_r>0$, define the local feasibility set
\[
\Theta_r(\tilde\pi)
:=\Bigl\{\theta\in\Theta:\ \mathcal C_\theta(\tilde\pi(r)\mid r)\le
\mathcal C_\theta(a\mid r)\ \ \forall a\in\mathcal A\Bigr\},
\]
and the global pricing envelope
\[
\Theta(\tilde\pi):=\bigcap_{r:\mu_r>0}\Theta_r(\tilde\pi).
\]
Equivalently, $\Theta(\tilde\pi)$ is the set of price parameters for which the
forced action in each region is decision-optimal relative to the conformal
interface given only the conformal output.
Since only action comparisons matter, envelopes are naturally reported in ratio
(projective) coordinates: global positive scaling of $L_\theta$ is irrelevant,
and adding offsets $b_y$ independent of $a$ preserves comparisons.

\subsection{Worked case: Chow-style reject option and commit on singletons}
\label{app:cost_pricing_envelopes:chow_singletons}

Consider $\mathcal A=\{0,1,\mathrm{rej}\}$ with Chow-style costs
(false negative $c_{01}>0$, false positive $c_{10}>0$, rejection $c_{\mathrm{rej}}\ge 0$):
\[
L(0,1)=c_{01},\quad L(1,0)=c_{10},\quad L(\mathrm{rej},y)=c_{\mathrm{rej}},
\quad L(0,0)=L(1,1)=0.
\]
In region $r$ the conditional risks are
\[
\mathcal C(0\mid r)=\eta_r\,c_{01},\qquad
\mathcal C(1\mid r)=(1-\eta_r)\,c_{10},\qquad
\mathcal C(\mathrm{rej}\mid r)=c_{\mathrm{rej}}.
\]
Work in ratios
\[
\lambda:=\frac{c_{01}}{c_{10}},\qquad
\rho:=\frac{c_{\mathrm{rej}}}{c_{10}},
\]
so only $(\lambda,\rho)$ matters for comparisons.

\paragraph{Convention.}
\[
\tilde\pi(10)=0,\qquad \tilde\pi(01)=1,\qquad
\tilde\pi(11)=\mathrm{rej},\qquad \tilde\pi(00)=\mathrm{rej}.
\]
This convention is decision-optimal relative to the conformal interface iff the
region-wise dominance inequalities below hold for all regions with $\mu_r>0$.

\paragraph{Singleton region $01$ (output $\{1\}$).}
Choosing $1$ must beat both $0$ and $\mathrm{rej}$:
\[
(1-\eta_{01}) \le \eta_{01}\lambda
\quad\Longleftrightarrow\quad
\eta_{01}\ge \frac{1}{1+\lambda},
\qquad
(1-\eta_{01}) \le \rho
\quad\Longleftrightarrow\quad
\eta_{01}\ge 1-\rho.
\]
Thus
\[
\eta_{01}\ \ge\ \max\!\left\{\frac{1}{1+\lambda},\ 1-\rho\right\}.
\]

\paragraph{Singleton region $10$ (output $\{0\}$).}
Choosing $0$ must beat both $1$ and $\mathrm{rej}$:
\[
\eta_{10}\lambda \le (1-\eta_{10})
\quad\Longleftrightarrow\quad
\eta_{10}\le \frac{1}{1+\lambda},
\qquad
\eta_{10}\lambda \le \rho
\quad\Longleftrightarrow\quad
\eta_{10}\le \frac{\rho}{\lambda}.
\]
Thus
\[
\eta_{10}\ \le\ \min\!\left\{\frac{1}{1+\lambda},\ \frac{\rho}{\lambda}\right\}.
\]

\paragraph{Rejection regions $r\in\{11,00\}$.}
Rejecting must beat both commitments:
\[
\rho \le \eta_r\lambda
\quad\Longleftrightarrow\quad
\eta_r\ge \frac{\rho}{\lambda},
\qquad
\rho \le (1-\eta_r)
\quad\Longleftrightarrow\quad
\eta_r\le 1-\rho.
\]
Hence rejection is optimal on region $r$ only if
\[
\frac{\rho}{\lambda}\ \le\ \eta_r\ \le\ 1-\rho.
\]
The rejection band is nonempty only if
\[
\frac{\rho}{\lambda}\le 1-\rho
\quad\Longleftrightarrow\quad
\rho \le \frac{\lambda}{1+\lambda}
\quad\Longleftrightarrow\quad
c_{\mathrm{rej}} \le \frac{c_{01}c_{10}}{c_{01}+c_{10}}.
\]
\paragraph{Polyhedral summary.}
\label{app:cost_pricing_envelopes:polytope}
\label{thm:pricing_polytope}
The conditions above can be summarized as a pricing envelope in
$(\lambda,\rho)$-space. Because they are linear inequalities, the admissible
cost set is a convex polytope (intersection of half-spaces), as is standard in
reject-option decision theory \citep{Yuan2010RejectOption}. For the convention
$\tilde\pi(10)=0$, $\tilde\pi(01)=1$, $\tilde\pi(11)=\tilde\pi(00)=\mathrm{rej}$,
the envelope is the set of $(\lambda,\rho)\in\mathbb R^2_{>0}$ satisfying
\begin{align}
\lambda &\ge \frac{1-\eta_{01}}{\eta_{01}}
\quad\text{[singleton $01$: commit to $1$ beats $0$]},\\
\lambda &\le \frac{1-\eta_{10}}{\eta_{10}}
\quad\text{[singleton $10$: commit to $0$ beats $1$]},\\
\rho &\ge 1-\eta_{01}
\quad\text{[singleton $01$: commit beats reject]},\\
\rho &\ge \lambda\eta_{10}
\quad\text{[singleton $10$: commit beats reject]},\\
\rho &\le \lambda\eta_r,\quad \rho\le 1-\eta_r
\quad\text{[rejection regions $r\in\{11,00\}$ with $\mu_r>0$]},
\end{align}
together with $\lambda>0$ and $\rho\ge 0$.

\subsection{Follow-up observations}
\label{app:cost_pricing_envelopes:observations}

\paragraph{Monotone ordering.}
\label{app:cost_pricing_envelopes:nonemptiness}
\label{prop:monotone_ordering}
A useful corollary of the inequalities is that the pricing envelope is non-empty only
when the rejection regions have intermediate label composition:
$\eta_{10}\le\eta_{01}$ and, for each populated $r\in\{11,00\}$,
$\eta_{10}\le\eta_r\le\eta_{01}$. This is the region-wise analogue of the
classical reject-option condition from
\citet{Chow1970,Herbei2006RejectOption}: rejection is decision-optimal only when the
posterior falls in an intermediate band.

\paragraph{Coverage-matched settings.}
\label{app:cost_pricing_envelopes:disjoint}
\label{cor:disjoint_envelopes}
Holding the scoring model and deployment distribution fixed, different
calibration settings $\tau$ and $\tau'$ can have the same marginal coverage but
different region compositions, and therefore different pricing envelopes. In
some cases those envelopes may even fail to overlap. The point is not tied to
different datasets; it is a statement about how the same underlying interface
can behave under different calibration choices. This is the conformal-specific
implication of the preceding decision-theoretic setup: coverage-matched rules
may still be decision-optimal under different parts of cost-ratio space.
Section~\ref{sec:applications:solubility}
and Figure~\ref{fig:solubility-tradeoff} show the milder empirical version of
the same idea, namely that the feasible $(\lambda,\rho)$ region varies across
Pareto regimes.

\paragraph{Dual view.}
\label{app:cost_pricing_envelopes:dual}
The same inequalities can be read in reverse: given $(\lambda,\rho)$, which
action convention is decision-optimal relative to the conformal interface? This partitions cost-ratio space into regions
where different conventions are decision-optimal---a standard dual perspective in
cost-sensitive classification \citep{Bartlett2008RejectHinge,Yuan2010RejectOption}.
For the present appendix, this is best read as an interpretive lens on the same
region-wise quantities rather than as a separate result.

\paragraph{Summary.}
The main point of this appendix is the same one stated at the start: for a
fixed calibrated conformal predictor and a fixed wiring convention $\tilde\pi$,
decision optimality relative to the conformal interface depends on the region-wise
label frequencies through the posteriors $\{\eta_r\}$. Coverage alone is not
enough, and the bare
prediction-set structure is not enough either. The worked Chow-style case,
together with the follow-up observations on monotone ordering
(\ref{prop:monotone_ordering}) and coverage-matched settings
(\ref{cor:disjoint_envelopes}), simply makes that dependence explicit for the
conformal interface and shows how the region--class label table restricts which
cost ratios make a given downstream convention decision-optimal.


\section{Explicit region indicators and projection masks}
\label{app:explicit_region_indicators}

\noindent
This appendix briefly instantiates the linear ``sum selected cells'' formalism from
Section~\ref{sec:setting:linear} in the binary geometry used in
Figure~\ref{fig:regions} and the policy-projection example from the main text
($R_{\tau}(x)\xrightarrow{\ \pi\ }C(x)$). Region feasibility and regime facts are
recorded in Appendix~\ref{app:geometry_binary_partition}; here we keep only the
explicit projection masks needed for audit computations.

\subsection{Region labels and the region--class label table}
\label{app:explicit_region_indicators:Ptheta}

Assume $\mathcal Y=\{0,1\}$ and thresholds $\tau=(\tau_0,\tau_1)$. Define the deployed
region label
\[
R_\tau(x)=(r_0(x),r_1(x))\in\{0,1\}^2,
\qquad
r_y(x):=\mathbf 1\{s(x,y)\le \tau_y\},
\]
with region names $\mathcal R=\{r_{10},r_{11},r_{01},r_{00}\}$ as in
Appendix~\ref{app:geometry_binary_partition:four_regions}. For a calibration setting
$\theta$ (indexing deployed thresholds $\tau(\theta)$), define the calibration-conditional
region--class label table
\[
P(\theta)=\big(p_{r,y}(\theta)\big)_{r\in\mathcal R,\ y\in\mathcal Y},
\qquad
p_{r,y}(\theta)
=
\Pr\!\big(R_{\tau(\theta)}(X)=r,\ Y=y\mid \mathcal D_{\mathrm{cal}}\big).
\]
A linear operational rate is obtained by summing the subset of cells $(r,y)$ that
define the event.

\subsection{Policies used in Figure~\ref{fig:policy-projection}}
\label{app:explicit_region_indicators:policies}

A policy $\pi$ maps region labels to prediction sets $C(x)\subseteq\{0,1\}$:
\[
R_{\tau}(x)\ \xrightarrow{\ \pi\ }\ C(x).
\]
The three policies used in Figure~\ref{fig:policy-projection} are:

\paragraph{Set inclusion $\pi_{\mathrm{SI}}$.}
\[
\pi_{\mathrm{SI}}(r_{10})=\{0\},\quad
\pi_{\mathrm{SI}}(r_{11})=\{0,1\},\quad
\pi_{\mathrm{SI}}(r_{01})=\{1\},\quad
\pi_{\mathrm{SI}}(r_{00})=\varnothing.
\]

\paragraph{Commit--reject $\pi_{\mathrm{CR}}$.}
\[
\pi_{\mathrm{CR}}(r_{10})=\{0\},\quad
\pi_{\mathrm{CR}}(r_{01})=\{1\},\quad
\pi_{\mathrm{CR}}(r_{11})=\varnothing,\quad
\pi_{\mathrm{CR}}(r_{00})=\varnothing.
\]

\paragraph{Set exclusion $\pi_{\mathrm{SE}}$ (complement of set inclusion).}
\[
\pi_{\mathrm{SE}}(r_{10})=\{1\},\quad
\pi_{\mathrm{SE}}(r_{11})=\varnothing,\quad
\pi_{\mathrm{SE}}(r_{01})=\{0\},\quad
\pi_{\mathrm{SE}}(r_{00})=\{0,1\}.
\]

\subsection{Binary projection masks}
\label{app:explicit_region_indicators:masks}

Fix a policy $\pi$. For any event/quantity $\ell$, define a $4\times 2$ binary mask
\[
G_\ell(\pi)=\big(g_\ell(r,y;\pi)\big)_{r\in\mathcal R,\ y\in\mathcal Y}\in\{0,1\}^{4\times 2},
\]
where $g_\ell(r,y;\pi)=1$ indicates that the cell $(r,y)$ is included in the sum.
Then the corresponding rate is
\[
r_\ell(\theta)
=
\sum_{r\in\mathcal R}\sum_{y\in\mathcal Y} g_\ell(r,y;\pi)\,p_{r,y}(\theta).
\]
We record one detailed worked example and then list several shorter
special cases used elsewhere in the paper.

\paragraph{(1) Coverage under set inclusion.}
Under $\pi_{\mathrm{SI}}$, coverage is the event $\{Y\in C(X)\}$. Cell-by-cell, coverage holds for:
(i) $y=0$ in regions $r_{10}$ or $r_{11}$, and (ii) $y=1$ in regions $r_{01}$ or $r_{11}$. Hence
\[
G_{\mathrm{cov}}(\pi_{\mathrm{SI}})=
\begin{array}{c|cc}
 & y=0 & y=1\\ \hline
r_{10} & 1 & 0\\
r_{11} & 1 & 1\\
r_{01} & 0 & 1\\
r_{00} & 0 & 0
\end{array}
\]
and therefore
\[
r_{\mathrm{cov}}(\theta)
=
p_{10,0}(\theta)+p_{11,0}(\theta)+p_{11,1}(\theta)+p_{01,1}(\theta).
\]
Equivalently, coverage fails on the two singleton mistakes $(r_{10},1)$ and $(r_{01},0)$
and on all abstentions $r_{00}$.

\paragraph{(2) Conformal Efficiency}
In order to characterize the amount of uncertainty carried by a conformal predictor, 
the concept of conformal efficiency \citep{HEYNDRICKX2023100070} has been proposed. 
Conformal efficiency is essentially defined as the total singleton rate. The region mask 
for conformal efficiency is obtained by summing the cells that define the event:
\[
G_{\mathrm{CE}}(\pi_{\mathrm{SI}})=
\begin{array}{c|cc}
 & y=0 & y=1\\ \hline
r_{10} & 1 & 1\\
r_{11} & 0 & 0\\
r_{01} & 1 & 1\\
r_{00} & 0 & 0
\end{array}
\]
and therefore
\[
r_{\mathrm{CE}}(\theta)
=
p_{10,0}(\theta)+p_{10,1}(\theta) + p_{01,0}(\theta)+p_{01,1}(\theta).
\]

Under $\pi_{\mathrm{SI}}$, coverage holds automatically on hedged outputs ($r_{11}$),
while coverage failures occur only on singleton mistakes and abstentions. In particular,
\[
r_{\mathrm{cov}}(\theta)
=
p_{10,0}(\theta)+p_{11,0}(\theta)+p_{11,1}(\theta)+p_{01,1}(\theta),
\]
so that
\[
r_{\mathrm{cov}}(\theta)-r_{\mathrm{CE}}(\theta)
=
\big(p_{11,0}(\theta)+p_{11,1}(\theta)\big)
-
\big(p_{10,1}(\theta)+p_{01,0}(\theta)\big).
\]
Thus, when calibration places substantial mass in the hedging region $r_{11}$,
coverage can exceed conformal efficiency because hedged prediction sets are always
covered. If the calibration is performed in the abstention regime ($r_{00}$), then 
coverage and conformal efficiency differ by the total error mass among singleton predictions.

\paragraph{(3) Other projections used in the paper.}
The same mask formalism gives the remaining quantities used in
Figure~\ref{fig:hook} and in the binary operational examples:
\[
\text{missed positive mass:}\qquad
q_{10}(\theta)=\Pr(Y=1,\ C(X)=\{0\})=\Pr(Y=1,\ R_\tau(X)=r_{10})=p_{10,1}(\theta),
\]
\[
\text{hedged positive mass:}\qquad
q_{11}(\theta)=\Pr(Y=1,\ C(X)=\{0,1\})=\Pr(Y=1,\ R_\tau(X)=r_{11})=p_{11,1}(\theta),
\]
\[
\text{abstention mass under }\pi_{\mathrm{CR}}:\qquad
r_{\mathrm{abs}}(\theta;\pi_{\mathrm{CR}})
=
\big(p_{11,0}(\theta)+p_{11,1}(\theta)\big)
+
\big(p_{00,0}(\theta)+p_{00,1}(\theta)\big).
\]
These correspond respectively to a one-cell projection, another one-cell
projection, and a two-row projection. Their binary masks are obtained
immediately by placing ones on the selected cells.

\subsection{Ratios of projections (conditional diagnostics)}
\label{app:explicit_region_indicators:ratios}

Some diagnostics are conditional probabilities and therefore ratios of linear sums.
For example, the purity of singleton-$1$ outputs under $\pi_{\mathrm{SI}}$ is
\[
\mathrm{Purity}_1(\theta)
:=
\Pr\!\big(Y=1\mid C(X)=\{1\}\big)
=
\frac{\Pr(Y=1,\ C(X)=\{1\})}{\Pr(C(X)=\{1\})}
=
\frac{p_{01,1}(\theta)}{p_{01,0}(\theta)+p_{01,1}(\theta)}.
\]

\paragraph{Audit computation.}
All entries of $P(\theta)$ are estimated from region--class label counts on the audit set.
Linear KPIs are computed by summing selected cells; conditional diagnostics are computed
as ratios of such sums.

\section{Tox21 supplementary details}
\label{app:tox21}

This appendix provides supplementary documentation for the Tox21 experiments
reported in Section~\ref{sec:applications:tox21}. The purpose of this
appendix is reproducibility and contextualization rather than extension of
results or additional claims.

\subsection{Tox21 dataset}

The Tox21 benchmark consists of binary toxicity outcomes for twelve biological
assays spanning nuclear receptor (NR) signaling and cellular stress response (SR)
pathways. Each compound is labeled as \emph{active} or \emph{inactive} per assay.
Labels are sparse and highly imbalanced, particularly for nuclear receptor
targets.

The dataset composition table \ref{tab:tox21-dataset} and aggregated coverage summary \ref{tab:tox21-summary}  are recorded here.

\begin{table}[t]
    \centering
    \caption{Tox21 dataset composition and effective average class-conditional calibration sizes under the experimental protocol.}
    \label{tab:tox21-dataset}
    \footnotesize
    \begin{tabular}{lcccc}
    \toprule
    Endpoint & Total Samples & Positive Rate & Calib. Positives & Calib. Negatives \\
    \midrule
    NR-AR            & 7265 & 4.3\%  & 77  & 1739 \\
    NR-AR-LBD        & 6758 & 3.5\%  & 59  & 1630 \\
    NR-AhR           & 6549 & 11.7\% & 192 & 1445 \\
    NR-Aromatase     & 5821 & 5.2\%  & 75  & 1380 \\
    NR-ER            & 6193 & 12.8\% & 198 & 1350 \\
    NR-ER-LBD        & 6955 & 5.0\%  & 87  & 1651 \\
    NR-PPAR-$\gamma$ & 6450 & 2.9\%  & 46  & 1566 \\
    SR-ARE           & 5832 & 16.2\% & 235 & 1223 \\
    SR-ATAD5         & 7072 & 3.7\%  & 66  & 1702 \\
    SR-HSE           & 6467 & 5.8\%  & 93  & 1523 \\
    SR-MMP           & 5810 & 15.8\% & 229 & 1223 \\
    SR-p53           & 6774 & 6.2\%  & 105 & 1587 \\
    \bottomrule
    \end{tabular}
\end{table}

\begin{table}[t]
    \centering
    \caption{\textbf{Aggregated Tox21 coverage and set statistics.} Results are
    averaged over twelve endpoints and 100 random splits.}
    \label{tab:tox21-summary}
    \footnotesize
    \begin{tabular}{lcccc}
    \toprule
    Method & Mean Coverage & Violation Rate & Avg. Set Size & Singleton Rate \\
    \midrule
    Standard Split Conformal & 0.917 & 0.305 & 1.41 & 0.52 \\
    DKWM Correction          & 0.986 & 0.005 & 1.78 & 0.22 \\
    SSBC                     & 0.951 & 0.068 & 1.54 & 0.40 \\
    \bottomrule
    \end{tabular}
\end{table}

\subsection{Representative endpoint-level operational summaries}

The main text reports SR-MMP as a representative moderate-prevalence endpoint.
We keep NR-AR here to document a complementary low-prevalence regime.

\begin{table}[t]
    \centering
\caption[NR-AR operational summary]{\textbf{NR-AR endpoint.}
Operational performance summary for the NR-AR endpoint.
All reported quantities are joint probabilities normalized by the total test-set size.
Rows report the singleton rate, doublet rate, and wrong-singleton rate
(\(P(Y=c,\,|S|=1,\,\hat y \neq Y)\)) by true class.
"\(\mathcal{D}_{1}\)" and "LOO 95\% PI \(\mathcal{D}_{1}\)" denote leave-one-out point estimates and their
beta--binomial planning intervals computed on the calibration data.
"\(\mathcal{D}_{2}\)" reports empirical rates on the audit set.
"BB 95\% PI \(\mathcal{D}_{2}\)" gives Beta--Binomial predictive summaries for the
corresponding \(\mathcal{D}_{2}\) quantities over a future window of the same size.}
\label{tab:app-nr-ar-operational}
\begin{tabular}{llcccc}
    \hline
    Operational quantity & Class & $\mathcal{D}_{1}$&  LOO 95\% PI $\mathcal{D}_{1}$ &  $\mathcal{D}_{2}$ &
         BB 95\% PI $\mathcal{D}_{2}$\\
    \hline
    Singleton rate & Class 0 & $0.163$ & $[0.125,0.205]$ & $0.133$ & $[0.112,0.157]$ \\
                    & Class 1 & $0.026$ & $[0.011,0.047]$ & $0.029$ & $[0.019,0.041]$ \\
    Doublet rate & Class 0 & $0.796$ & $[0.751,0.838]$ & $0.818$ & $[0.792,0.843]$ \\
                & Class 1 & $0.015$ & $[0.005,0.031]$ & $0.020$ & $[0.012,0.030]$ \\
    Wrong-singleton rate & Class 0 & $0.086$ & $[0.058,0.119]$ & $0.060$ & $[0.045,0.077]$ \\
                & Class 1 & $0.002$ & $[0.000,0.010]$ & $0.001$ & $[0.000,0.004]$ \\
    \hline
\end{tabular}
\end{table}

Because conformal calibration is performed in a class-conditional (Mondrian)
manner, the effective calibration size for the positive class governs feasibility
and discretization effects. 

\subsection{Representation and model protocol}

All Tox21 experiments use a fixed, task-agnostic molecular representation
(descriptors plus Morgan fingerprints) and a fixed CatBoost training protocol
(1000 iterations, depth 6, learning rate 0.1, log-loss), with no assay-specific
feature engineering, class reweighting, or resampling. Molecules are parsed and
sanitized using standard cheminformatics tooling, and compounds with invalid
features are excluded. This intentionally non-optimized setup isolates conformal
calibration and operational-envelope behavior from model-architecture tuning;
observed AUROC values range between 0.80--0.85 across endpoints, and are consistent 
with baseline Tox21 classifiers.

\subsection{Conformal calibration and evaluation}

All conformal predictors are calibrated in a class-conditional (Mondrian) fashion, with separate
calibration sets for the positive and negative classes. For each run, one of
three calibration strategies is applied: standard split conformal, DKWM-based
correction, or SSBC, targeting $(\alpha,\delta)=(0.10,0.10)$.

Calibration thresholds are computed once per run and evaluated on an
independent held-out split. In the main text tables, the ``
\(\mathcal{D}_{2}\)'' columns report the corresponding endpoint-level
operational rates on that held-out split, which serve as the audit-based
reference for the operational claims. The ``\(\mathcal{D}_{1}\)'' and 
``LOO 95\% PI \(\mathcal{D}_{1}\)'' columns are
single-sample planning summaries computed from leave-one-out recalibration on
the calibration split. They are included to compare the LOO planning proxy to
the independent audit reference.

      \section{Solubility supplementary details}
\label{app:solubility}

This appendix records methodological details for the solubility
scenario-planning experiments in
Section~\ref{sec:applications:solubility}. We first describe the
dataset, partitioning, modeling, and calibration setup, and then
document the operational artifacts used to interpret the resulting
Pareto front.

\subsection{Dataset and label construction}

We use AquaSolDB as the source of aqueous solubility measurements ($\log S$),
following the curation and quality-control procedures described in the dataset
reference \citep{Sorkun2019AqSolDB}. The curated source file used in our
experiments contains 9,982 entries. Molecules with invalid SMILES strings are
excluded before feature generation. For model training and scoring we discretize
$\log S$ into three regimes with fixed thresholds: Insoluble ($\log S<-4$),
Moderate ($-4\le\log S<-2$), and Soluble ($\log S\ge-2$), defining a three-class
label space $\mathcal Y=\{0,1,2\}$ \citep{Kalepu2015Insoluble}.

\subsection{Partitioning: scaffold-based training and stratified calibration/test split}

We use a hybrid split to separate scaffold-aware model fitting from the
calibration pool used in the LOO planning analysis:

\begin{itemize}
\item \textbf{Scaffold-based training split.} Molecules are grouped by
Bemis--Murcko scaffold. The scaffold groups are shuffled with a fixed seed, and
groups are added to the training split until roughly 70\% of molecules are
assigned. This makes the training split scaffold-aware rather than purely
random.

\item \textbf{Stratified calibration/test split.} The remaining molecules are
split equally between calibration and test using stratified random sampling over
the three-class target.

\item \textbf{Calibration/test split for diagnostics.} The calibration/test
split is retained for exploratory diagnostics and documentation of the scenario
construction. The reported planning quantities in
Section~\ref{sec:applications:solubility} are not obtained from an independent
audit/test evaluation; they are obtained from the leave-one-out planning
interface on the scenario-restricted calibration sample.
\end{itemize}

\subsection{Molecular representation}

Molecules are represented using concatenated multi-resolution Morgan
fingerprints \citep{Rogers2010Morgan} computed with RDKit. We use three
fingerprint tiers: radius 4 with 512 bits, radius 3 with 1024 bits, and radius
2 with 2048 bits, yielding 3584 binary features per molecule. Molecules that
fail RDKit parsing are excluded before feature generation.

\subsection{Model and training protocol}

We train a CatBoost \citep{Prokhorenkova2018CatBoost} gradient-boosted decision
tree classifier on the scaffold-based training split with fixed hyperparameters:
1000 boosting iterations, depth 6, learning rate 0.05, random seed 42, and
multi-class loss. After training, the predictor is frozen and treated as
infrastructure; the experiments study the conformal and operational layers rather
than optimizing base accuracy.

\subsection{Calibration and operationalization}

Uncertainty quantification uses split conformal prediction with SSBC calibration
(Section~\ref{sec:coverage_control} and Appendix~\ref{app:ssbc}), which selects a
grid index to stabilize realized coverage semantics at user-specified confidence
\citep{Vovk2012PAC}. In the solubility case study, all reported planning
quantities are implemented through the leave-one-out (LOO) construction
described in Appendix~\ref{app:single_sample_structural_coupling}, rather than
through an independent audit split. This calibration layer therefore supports
the finite-window planning-envelope constructions used for operational
planning. The calibration/test split is retained as supporting context for the
scenario construction, but the reported trade-off maps and planning quantities
are generated from the LOO interface on the scenario-restricted calibration
sample.

Although the classifier is trained on three classes (Insoluble, Moderate,
Soluble), the planning analysis in Section~\ref{sec:applications:solubility}
uses a binary operational objective obtained by merging Moderate and Soluble into
a single Soluble class. Conformal prediction sets are computed in the
binary label space $\{\mathrm{Insol},\mathrm{Sol}\}$ and operational quantities
(loss, waste, hedging, decisiveness) are computed with respect to this binary
interface.

\subsection{Deployment-matched calibration via chemical tribes}
\label{sec:deployment_matched_cal}

Scenario planning conditions exchangeability on a deployment-defining event by
restricting calibration to a chemically defined subpopulation.

\paragraph{Tribe construction.}
We define coarse chemical ``tribes'' using RDKit $\mathrm{MolLogP}$:
\[
\begin{aligned}
\texttt{Tribe\_Lipophilic}:&\ \mathrm{MolLogP}>3.5,\\
\texttt{Tribe\_Hydrophilic}:&\ \mathrm{MolLogP}<1.0,\\
\texttt{Tribe\_Neutral}:&\ 1.0\le \mathrm{MolLogP}\le 3.5.
\end{aligned}
\]
We compute summary statistics of nonconformity and class probabilities
stratified by tribe to document and motivate the focal deployment regime used in
the planning study.

\paragraph{Focal deployment scenario.}
Section~\ref{sec:applications:solubility} focuses on a lipophilic deployment
regime, targeting molecules that are not naturally hydrophilic. SSBC calibration 
is therefore performed on the restricted sample of lipophilic molecules:
\[
\mathcal{D}_{\mathrm{cal}}^{\mathrm{lip}}
=
\{(X_i,Y_i)\in\mathcal{D}_{\mathrm{cal}}:\mathrm{MolLogP}(X_i)>3.5\},
\]
treated as exchangeable with respect to the intended deployment distribution.
The cutoff $\mathrm{MolLogP}>3.5$ is used here as a domain-motivated scenario
definition for lipophilic compounds, not as a tuned parameter chosen to optimize
the observed trade-off map.
After down-selection to this scenario and merging Moderate and Soluble into the
binary soluble label, the resulting planning dataset contains 319 entries:
53 soluble and 266 insoluble. These counts describe the scenario dataset over
which the planning sweep is performed; they do not correspond to a single
selected operating point.

\paragraph{Scenario-conditional exchangeability assumption.}
This restriction is scenario conditioning, not a general covariate-shift
correction. We assume deployment draws satisfy the same scenario event
$E=\{\mathrm{MolLogP}(X)>3.5\}$ so calibration and deployment are exchangeable
conditional on $E$. If deployment violates $E$, or if exchangeability fails
within $E$, conformal validity is not guaranteed. Shift-robust validity methods
are complementary and outside scope; see, e.g.,
\citet{Fannjiang2022FeedbackCovariateShift}.

\paragraph{Interpretation.}
Conditioning trades calibration sample size for improved deployment match; the
resulting loss of statistical efficiency appears directly as stronger SSBC
corrections and wider planning envelopes, consistent with transparent planning
under finite-sample uncertainty. 

\subsection{Operational outcome definitions}

Operational outcomes are defined in the main text via the binary prediction-set
interface (TP, FN, HP, FP, TN, HN). All reported quantities in the
scenario-planning analysis are joint rates over the set--label outcomes
$P(Y=y,\,C=c)$, scaled to expected counts over a window of $m=1000$ molecules.
For this case study, these quantities are estimated using pooled LOO indicators
and the associated planning-envelope construction. They are planning summaries
for the restricted scenario sample and exhibit the structure of feasible
trade-offs under the chosen scenario definition.

\subsection{Additional Pareto-front details}
\label{app:solubility_pareto}

We record the representative Pareto-optimal regimes here and then summarize the
appendix-specific details behind the solubility Pareto sweep, in particular how
the nominal SSBC parameters function along the front and how to interpret
$\alpha$ versus $\delta$ once the scenario-restricted planning interface has
been fixed.

\begin{table}[t]
    \centering
    \caption{Two representative Pareto-optimal operating regimes for the
    solubility scenario, illustrating contrasting planning postures. Rates are
    expected counts per 1000 molecules, with planning envelopes in brackets.}
    \label{tab:solubility_extremes_main}
    \begin{tabular}{l@{\hspace{1.5em}}r@{\,}r@{\hspace{1.5em}}r@{\,}r}
    \toprule
     & \multicolumn{2}{c@{\hspace{1.5em}}}{\textbf{Loss-minimizing regime}} & \multicolumn{2}{c}{\textbf{High-decisiveness regime}} \\
    \midrule
    $\alpha_0$ & \multicolumn{2}{c@{\hspace{1.5em}}}{0.125} & \multicolumn{2}{c}{0.150} \\
    $\delta_0$ & \multicolumn{2}{c@{\hspace{1.5em}}}{0.150} & \multicolumn{2}{c}{0.150} \\
    $\alpha_1$ & \multicolumn{2}{c@{\hspace{1.5em}}}{0.050} & \multicolumn{2}{c}{0.150} \\
    $\delta_1$ & \multicolumn{2}{c@{\hspace{1.5em}}}{0.075} & \multicolumn{2}{c}{0.100} \\
    \midrule
    \cmidrule(lr){2-3} \cmidrule(lr){4-5}
     & Rate & Interval & Rate & Interval \\
    \midrule
    Loss rate  &
    \makebox[1.0cm][r]{3} & \makebox[1.8cm][r]{[0,\,35]} &
    \makebox[1.0cm][r]{12} & \makebox[1.8cm][r]{[0,\,58]} \\
    Waste rate  &
    \makebox[1.0cm][r]{70} & \makebox[1.8cm][r]{[27,\,130]} &
    \makebox[1.0cm][r]{13} & \makebox[1.8cm][r]{[0,\,49]} \\
    Total hedge rate  &
    \makebox[1.0cm][r]{785} & \makebox[1.8cm][r]{[663,\,896]} &
    \makebox[1.0cm][r]{514} & \makebox[1.8cm][r]{[376,\,675]} \\
    Correct soluble singleton rate &
    \makebox[1.0cm][r]{7} & \makebox[1.8cm][r]{[0,\,38]} &
    \makebox[1.0cm][r]{88} & \makebox[1.8cm][r]{[40,\,155]} \\
    Correct insoluble singleton rate &
    \makebox[1.0cm][r]{135} & \makebox[1.8cm][r]{[68,\,215]} &
    \makebox[1.0cm][r]{282} & \makebox[1.8cm][r]{[196,\,375]} \\
    \midrule
    Decisiveness ($1000-\text{hedge}$) &
    \multicolumn{2}{c@{\hspace{1.5em}}}{\makebox[1.0cm][r]{215}} &
    \multicolumn{2}{c}{\makebox[1.0cm][r]{486}} \\
    \bottomrule
    \end{tabular}
\end{table}

\subsection{Functional roles of calibration parameters}
\label{sec:functional_roles}

The Pareto sweep in Section~\ref{sec:applications:solubility} varies the four
nominal parameters $(\alpha_0,\delta_0,\alpha_1,\delta_1)$ (with SSBC mapping them
to effective deployed grid levels). Empirically the knobs are not interchangeable:
changes reallocate mass among outcome categories (loss, waste, hedging,
decisiveness) along channels constrained by the underlying threshold geometry.

Throughout, class~0 denotes insoluble and class~1 denotes soluble.
The qualitative roles below summarize matched Pareto-optimal solutions where one
parameter varies while the others are held fixed; rates are computed from the
provided Pareto-front CSV and reported as events per 1000 molecules.

\paragraph{Role of $\alpha_0$: global conservatism against irreversible loss.}
Decreasing $\alpha_0$ suppresses loss by expanding hedging; increasing $\alpha_0$
collapses ambiguity into more decisive singleton predictions. In many matched
segments, the dominant effect is redistribution between hedged and singleton
outcomes with loss already near saturation.

\paragraph{Role of $\delta_0$: fine-scale sharpening on the insoluble side.}
For fixed $(\alpha_0,\alpha_1,\delta_1)$, $\delta_0$ tends to act locally,
shifting mass between hedged and singleton insoluble predictions with limited
impact on loss.

\paragraph{Role of $\alpha_1$: boundary-controlled tolerance with nonlocal effects.}
Although $\alpha_1$ is nominally associated with the soluble side, changing
$\alpha_1$ can move the operating point across geometric boundaries, inducing
nonlocal reallocations that may appear most strongly in insoluble outcomes.

\paragraph{Role of $\delta_1$: systematic hedge-to-insoluble reassignment.}
Across stable regions of the front, increasing $\delta_1$ often converts a
portion of hedged mass into insoluble singletons, reflecting how the normalized
score constraints and threshold geometry position the hedging interval.

\paragraph{Geometric interpretation.}
Overall, the Pareto front is shaped by geometric coupling rather than independent
tuning: each knob reallocates probability mass along constrained channels
determined by the threshold partition and the finite-sample SSBC adjustment.

\subsection{Interpreting \texorpdfstring{$\alpha$}{alpha} versus \texorpdfstring{$\delta$}{delta} on the Pareto front}
\label{sec:alpha_vs_delta}

Within SSBC the deployed threshold corresponds to an adjusted effective level
$\tilde{\alpha}$ determined jointly by $(\alpha,\delta)$. Thus $\alpha$ and
$\delta$ are best viewed as two parameterizations of a single underlying control
(the effective threshold), with different practical resolution under finite-sample
constraints. The approximate feasibility relation $\delta\approx(1-\alpha)^N$
implies
\[
\log\delta = N\log(1-\alpha),
\]
so changes in $\delta$ correspond to fine-grained (approximately logarithmic)
adjustments in $\tilde{\alpha}$, whereas changes in $\alpha$ on a coarse grid can
move the operating point across qualitatively different regions of the feasible
manifold. This helps explain why $\alpha$ sweeps can trigger large reallocations
among loss/waste/hedging outcomes, while $\delta$ sweeps more often act as local
``sharpening'' of decisions within a geometric regime.

\end{document}